 \newcommand{\squeezeup}{\vspace{-0.5cm}}
  \newcommand{\squeezeupsmall}{\vspace{-0.25cm}}
\newcommand{\Commnt}[1] {\textcolor{Brown}{\texttt{//} #1}\\}
\newcommand{\EndCommnt}[1] {\textcolor{Brown}{\texttt{//} #1}}
\newcommand{\helen}[1] {\textbf{\textcolor{magenta}{Helen: #1}}}
 \renewcommand{\vec}[1] {\ensuremath{\mathbf{#1}}\xspace}
 \newcommand{\dimension}{\ensuremath{d}\xspace}
 \newcommand{\Ten}[1] {\ensuremath{\mathbf{\mathcal{#1}}}\xspace}
 \newcommand{\tA}{\Ten{A}}
 \newcommand{\tB}{\Ten{B}}
 \newcommand{\vx}{\vec{x}}
  \newcommand{\vy}{\vec{y}}
    \newcommand{\vk}{\vec{k}}
\newcommand{\pluseq}{\mathrel{{\oplus}{=}}}
\newcommand{\alg}{algorithm\xspace}
\newcommand{\shortinc}{$\proc{BDBS}$\xspace}
\newcommand{\shortincspan}{$\Theta\paren{d\log N}$\xspace}
\newcommand{\shortex}{box-complement\xspace}
\newcommand{\longex}{box-complement\xspace}
\newcommand{\onedinc}{$\proc{BDBS-1D}$\xspace}
\newcommand{\prefix}{$\proc{Prefix}$}
\newcommand{\suffix}{$\proc{Suffix}$}
\newcommand{\incprerow}{$\proc{Range-Prefix-Row}$\xspace}
\newcommand{\incsufrow}{$\proc{Range-Suffix-Row}$\xspace}
\newcommand{\incresultrow}{$\proc{Combine-Along-Row}$\xspace}
\newcommand{\incdim}{$\proc{BDBS-Along-Dim}$\xspace}
\newcommand{\incdimfull}{$\incdim(\tA, \vk, i, j)$\xspace}
\newcommand{\incdimwork}{$\Theta\paren{\prod_{\ell=i+1}^{\dimension} n_\ell}$\xspace}
\newcommand{\incdimspan}{$\Theta\paren{\sum_{\ell = i+1}^{d} \log n_\ell}$}
\newcommand{\predim}{$\proc{Prefix-Along-Dim}$\xspace}
\newcommand{\predimfull}{$\predim(\tA, i)$\xspace}
\newcommand{\predimspan}{$\Theta\paren{\max\paren{1, \sum_{j=i+1}^{\dimension} \log n_j}}$\xspace}
\newcommand{\sufdim}{$\proc{Suffix-Along-Dim}$\xspace}
\newcommand{\addcontrib}{$\proc{Add-Contribution}$\xspace}
\newcommand{\addcontribwork}{$\Theta(N)$\xspace}
\newcommand{\boxcomp}{$\proc{Box-\linebreak[0]Complement}$\xspace}
\newcommand{\boxcompwork}{$\Theta(dN)$\xspace}
\newcommand{\boxcompspan}{$\Theta(d^2 \log N)$\xspace}
\newcommand{\corners}{corners\xspace}
\newcommand{\demalg}{\corners algorithm\xspace}
\newcommand{\incsum}{bidirectional box-sum algorithm\xspace}
\definecolor{codegreen}{rgb}{0,0.6,0}
\definecolor{codegray}{rgb}{0.5,0.5,0.5}
\definecolor{codepurple}{rgb}{0.58,0,0.82}
\definecolor{backcolour}{rgb}{0.95,0.95,0.92}
\lstdefinestyle{mystyle}{
    backgroundcolor=\color{backcolour},
    commentstyle=\color{codegreen},
    keywordstyle=\color{magenta},
    numberstyle=\tiny\color{codegray},
    stringstyle=\color{codepurple},
    basicstyle=\ttfamily\footnotesize,
    breakatwhitespace=false,
    breaklines=true,
    captionpos=b,
    keepspaces=true,
    numbers=left,
    numbersep=5pt,
    showspaces=false,
    showstringspaces=false,
    showtabs=false,
    tabsize=2
}
\newcounter{casenum}
\newenvironment{caseof}{\setcounter{casenum}{1}}{\vskip.5\baselineskip}
\newcommand{\case}[2]{\vskip.5\baselineskip\par\noindent {\bfseries Case \arabic{casenum}:} #1\\#2\addtocounter{casenum}{1}}
\newcommand{\rt}{{\,:\,}}
\definecolor{safe-black}{RGB}{0,0,0}
\definecolor{safe-olive}{RGB}{0,73,73}
\definecolor{safe-teal}{RGB}{0,146,146}
\definecolor{safe-pink}{RGB}{255,109,182}
\definecolor{safe-peach}{RGB}{255,160,110}
\definecolor{safe-plum}{RGB}{73,0,146}
\definecolor{safe-cerulean}{RGB}{0,109,219}
\definecolor{safe-lavender}{RGB}{182,109,255}
\definecolor{safe-sky}{RGB}{109,182,255}
\definecolor{safe-baby}{RGB}{182,219,255}
\definecolor{safe-brick}{RGB}{146,0,0}
\definecolor{safe-brown}{RGB}{146,73,0}
\definecolor{safe-orange}{RGB}{219,209,0}
\definecolor{safe-green}{RGB}{36,255,36}
\definecolor{safe-yellow}{RGB}{255,255,109}
\newcommand{\zelementspacing}{0.15}
\newcommand{\zlinecolor}{safe-black}
\newcommand{\zbasecolor}{safe-peach}
\newcommand{\znonzerocolor}{\zbasecolor}
\newcommand{\zlinewidth}{1.4}
\newcommand{\zscale}[1]{1/#1}
\newcommand{\znonzero}[2]{\draw[\znonzerocolor,fill=\znonzerocolor] (#1 + \zelementspacing - 1, #2 + \zelementspacing - 1) rectangle (#1 + 1 - \zelementspacing - 1, #2 + 1 - \zelementspacing - 1);}
\newcommand{\ztext}[3]{\node[fill=none, minimum width = 1, minimum height = 1] at (#1 - 0.5, #2 - 0.5) {#3};}
\newcommand{\zbox}[4]{\draw[\zlinecolor, fill=none, line width = \zlinewidth] (#1, #2) rectangle (#3, #4);}
\newcommand{\zmatrix}[4]{\node[left delimiter = (, right delimiter = ), align = center, fit={(#1, #2) (#3, #4)}] {};}
\tikzset{
  every left delimiter/.style={xshift=.5em},
  every right delimiter/.style={xshift=-.5em},
}
  \providecommand\BibTeX{{%
      \normalfont B\kern-0.5em{\scshape i\kern-0.25em b}\kern-0.8em\TeX}}}
\begin{document}

\title{Multidimensional Included and Excluded Sums}

\author{Helen Xu\thanks{Computer Science and Artificial Intelligence
    Laboratory. Massachusetts Institute of Technology. Email:
    \texttt{\{hjxu, sfraser, cel\}@mit.edu.}}
\and Sean Fraser\footnotemark[1]
\and Charles E. Leiserson\footnotemark[1]}
\date{}

\maketitle

\pagenumbering{arabic}
\setcounter{page}{1}


\fancyfoot[C]{\thepage}
\fancyfoot[R]{\scriptsize{Copyright \textcopyright\ 2021 by SIAM\\
Unauthorized reproduction of this article is prohibited}}




\pagenumbering{arabic}

\begin{abstract}
  This paper presents algorithms for the \defn{included-sums} and
  \defn{excluded-sums} problems used by scientific computing applications such
  as the fast multipole method.  These problems are defined in terms of a
  $d$-dimensional array of $N$ elements and a binary associative
  operator~$\oplus$ on the elements.  The included-sum problem requires that the
  elements within overlapping boxes cornered at each element within the array be
  reduced using~$\oplus$.  The excluded-sum problem reduces the elements outside
  each box.  The \defn{weak} versions of these problems assume that the operator
  $\oplus$ has an inverse $\ominus$, whereas the \defn{strong} versions do not
  require this assumption.  In addition to studying existing algorithms to solve
  these problems, we introduce three new algorithms.

  The \defn{bidirectional box-sum (BDBS)} algorithm solves the strong
  included-sums problem in $\Theta(d N)$ time, asymptotically beating the
  classical \defn{summed-area table (SAT)} algorithm, which runs
  in~$\Theta(2^d N)$ and which only solves the weak version of the problem.
  Empirically, the BDBS algorithm outperforms the SAT algorithm in higher
  dimensions by up to $17.1\times$.

  The \defn{box-complement} algorithm can solve the strong excluded-sums problem
  in $\Theta(d N)$ time, asymptotically beating the state-of-the-art
  \defn{corners} algorithm by Demaine \textit{et al.}, which runs in
  $\Omega(2^d N)$ time.  In 3 dimensions the box-complement algorithm
  empirically outperforms the corners algorithm by about $1.4\times$ given
  similar amounts of space.

  The weak excluded-sums problem can be solved in $\Theta(d N)$ time by the
  \defn{bidirectional box-sum complement (BDBSC)} algorithm, which is a trivial
  extension of the BDBS algorithm. Given an operator inverse $\ominus$,
  BDBSC can beat box-complement by up to a factor of~$4$.

\end{abstract}


\section{Introduction}\label{sec:intro}

\begin{figure}[t]
  \begin{center}
    \includegraphics[width=.9\linewidth]{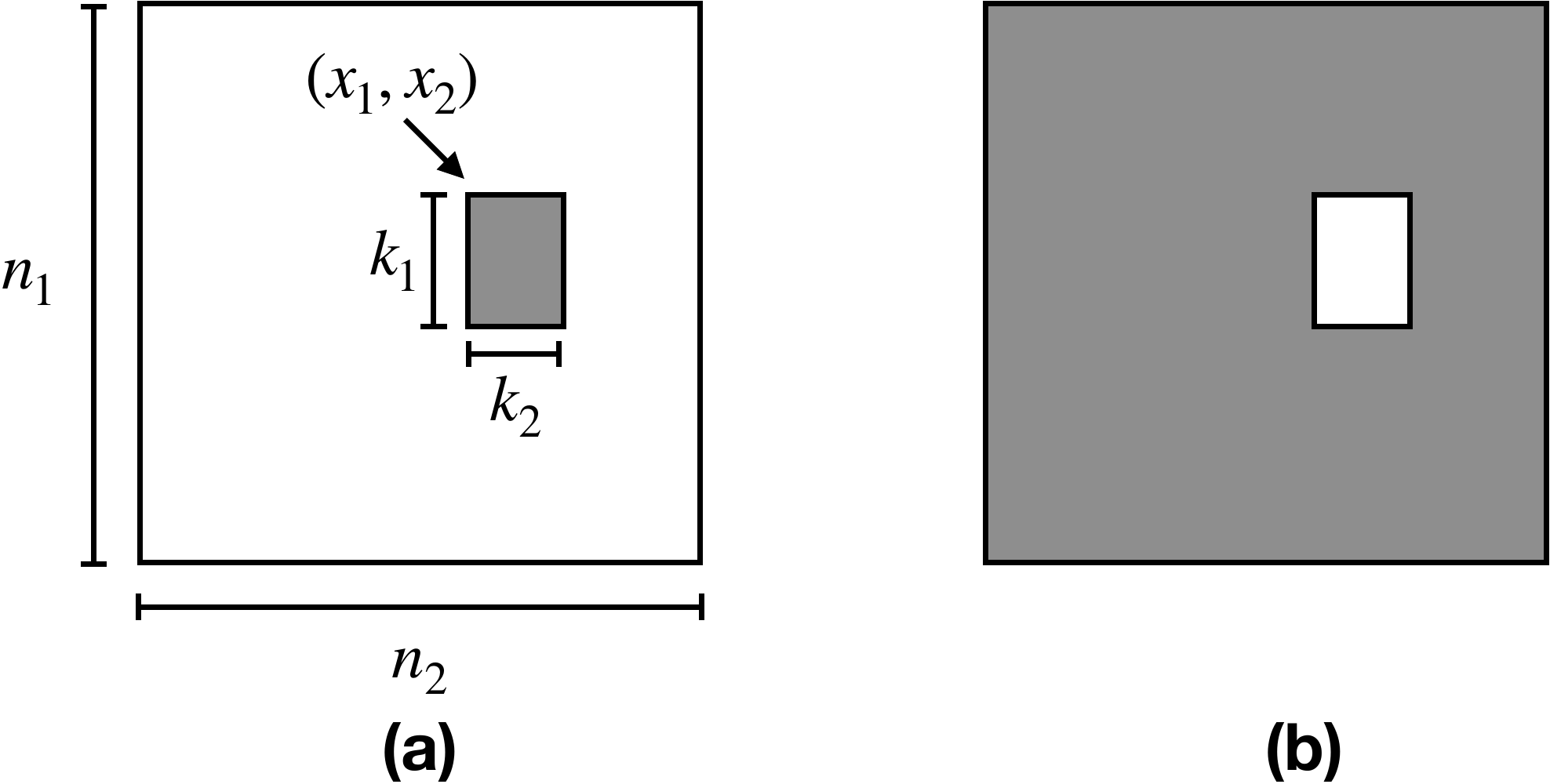}
    \end{center}
    \squeezeup
    \caption{ An illustration of included and excluded sums in $2$ dimensions on
      an $n_1\times n_2$ matrix using a $(k_1,k_2)$-box.  \textbf{(a)}~For a
      coordinate $(x_1,x_2)$ of the matrix, the included-sums problem requires
      all the points in the $k_1\times k_2$ box cornered at $(x_1,x_2)$, shown
      as a grey rectangle, to be reduced using a binary associative
      operator~$\oplus$.  The included-sums problem requires that this reduction
      be performed at \emph{every} coordinate of the matrix, not just at a
      single coordinate as is shown in the figure.  \textbf{(b)}~A similar
       illustration for excluded sums, which reduces the points outside the box.}
  \label{fig:2d-inc-exsum}
\end{figure}

Many scientific computing applications require reducing many (potentially
overlapping) regions of a tensor, or multidimensional array, to a single value
for each region quickly and accurately. For example, the integral-image problem
(or summed-area table)~\cite{Crow84, BradRo07} preprocesses an image to answer
queries for the sum of elements in arbitrary rectangular subregions of a matrix
in constant time.  The integral image has applications in real-time image
processing and filtering~\cite{HensScCo05}. The fast multipole method (FMM) is a
widely used numerical approximation for the calculation of long-ranged forces in
various $N$-particle simulations~\cite{greengard_rokhlin, BeatGr97}. The essence
of the FMM is a reduction of a neighboring subregion's elements, excluding
elements too close, using a multipole expansion to allow for fewer pairwise
calculations~\cite{Darve00,CoifRoWa93}.  Specifically, the multipole-to-local
expansion in the FMM adds relevant expansions outside some close neighborhood
but inside some larger bounding region for each element~\cite{BeatGr97,
  SunPi01}.  High-dimensional applications include the FMM for particle
simulations in 3D space~\cite{GumeDu05, ChengCrGi06} and direct summation
problems in higher dimensions~\cite{MarchBi17}.

These problems give rise to the excluded-sums problem~\cite{DemaDeEd05}, which
underlies applications that require reducing regions of a tensor to a single
value using a binary associative operator. For example, the excluded-sums
  problem corresponds to the translation of the local expansion coefficients
  within each box in the FMM~\cite{greengard_rokhlin}. The problems are called
  ``sums'' for ease of presentation, but the general problem statements (and
  therefore algorithms to solve the problems) apply to any context involving a
  \defn{monoid} $(S, \oplus, e)$, where $S$ a set of values, $\oplus$ is a
  binary associative operator defined on $S$, and $e\in S$ is the identity
  for~$\oplus$.

Although the excluded-sums problem is particularly challenging and meaningful
for multidimensional tensors, let us start by considering the problem in only 2
dimensions.  And, to understand the excluded-sums problem, it helps to
understand the included-sums problem as well.  \figref{2d-inc-exsum} illustrates
included and excluded sums in 2 dimensions, and \figref{2d-worked-example}
provides examples using ordinary addition as the $\oplus$ operator.  We have an
$n_1\times n_2$ matrix \tA of elements over a monoid $(S, \oplus, e)$.  We also
are given a ``box size'' $\vk = (k_1, k_2)$ such that $k_1 \leq n_1$ and
$k_2 \leq n_2$.  The \defn{included sum} at a coordinate $(x_1, x_2)$, as shown
in \figref{2d-inc-exsum}(a), involves \defn{reducing} --- accumulating using
$\oplus$ --- all the elements of $\tA$ inside the $\vk$-box \defn{cornered} at
$(x_1, x_2)$, that is,
\[
\bigoplus_{y_1=x_1}^{x_1+k_1-1}\bigoplus_{y_2 = x_2}^{x_2+k_2-1} \tA[y_1, y_2]\ ,
\]
where if a coordinate goes out of range, we assume that its value is the
identity~$e$.  The \defn{included-sums problem} computes the included sum for
all coordinates of~$\tA$, which can be straightforwardly accomplished with four
nested loops in~$\Theta(n_1 n_2 k_1 k_2)$ time.  Similarly, the \defn{excluded
  sum} at a coordinate, as shown in \figref{2d-inc-exsum}(b), reduces all the
elements of $\tA$ outside the $\vk$-box cornered at $(x_1, x_2)$.  The
\defn{excluded-sums problem} computes the excluded sum for all coordinates
of~$\tA$, which can be straightforwardly accomplished in
$\Theta(n_1 n_2 (n_1-k_1) (n_2-k_2))$ time.  We shall see much better algorithms
for both problems.

\subsection*{Excluded Sums and Operator Inverse}
One way to solve the excluded-sums problem is to solve the included-sums problem
and then use the inverse $\ominus$ of the $\oplus$ operator to ``subtract'' out
the results from the reduction of the entire tensor.  This approach fails for
operators without inverse, however, such as the maximum operator~$\max$.  As
another example, the FMM involves solving the excluded-sums problem over a
domain of functions which cannot be ``subtracted,'' because the functions
exhibit singularities~\cite{DemaDeEd05}.  Even for simpler domains, using the
inverse (if it exists) may have unintended consequences.  For example,
subtracting finite-precision floating-point values can suffer from catastrophic
cancellation~\cite{DemaDeEd05, ripmaps} and high round-off
error~\cite{higham_paper}.  Some contexts may permit the use of an inverse, but
others may not.

Consequently, we refine the included- and excluded-sums problems into
\defn{weak} and \defn{strong} versions. The weak version requires an operator
inverse, while the strong version does not.  Any algorithm for the included-sums
problem trivially solves the weak excluded-sums problem, and any algorithm for
the strong excluded-sums problem trivially solves the weak excluded-sums
problem. This paper presents efficient algorithms for both the weak and strong
excluded-sums problems.

\subsection*{Summed-area Table for Weak Excluded Sums}

The \defn{summed-area table (SAT)} algorithm uses the classical summed-area
table method~\cite{Crow84, BradRo07,Tapia11} to solve the weak included-sums
problem on a $d$-dimensional tensor \tA having $N$ elements in $O(2^dN)$ time.
This algorithm precomputes prefix sums along each dimension of \tA and uses
inclusion-exclusion to ``add'' and ``subtract'' prefixes to find the included
sum for arbitrary boxes. The SAT algorithm cannot be used to solve the strong
included-sums problem, however, because it requires an operator inverse.
The summed-area table algorithm can easily be extended to an algorithm for weak
excluded-sums by totaling the entire tensor and subtracting the solution to weak
included sums. We will call this algorithm the \defn{SAT complement (SATC)
  algorithm}.



\begin{figure}
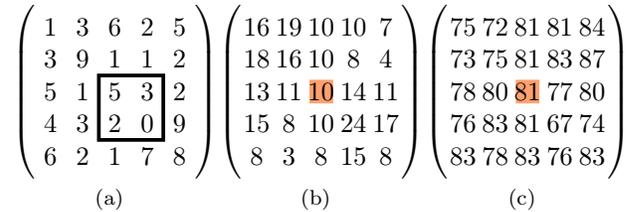

  \begin{subfigure}{.31\textwidth}
    \begin{center}
      \begin{tikzpicture}[scale=6]
        \input{inputmtx}
      \end{tikzpicture}
    \end{center}
    \squeezeup
    \caption{}
  \end{subfigure}
    \begin{subfigure}{.31\textwidth}
    \begin{center}
      \begin{tikzpicture}[scale=6]
        \input{inclusion}
      \end{tikzpicture}
    \end{center}
    \squeezeup
    \caption{}
  \end{subfigure}
      \begin{subfigure}{.31\textwidth}
    \begin{center}
      \begin{tikzpicture}[scale=6]
        \input{exclusion}
      \end{tikzpicture}
    \end{center}
    \squeezeup
    \caption{}
  \end{subfigure}
 \squeezeup
  \caption{Examples of the included- and excluded-sums problems on an input
    matrix in 2 dimensions with box size $(3, 3)$ using the $\max$ operator.
    \textbf{(a)}~The input matrix.  The square shows the box cornered at
    $(3, 3)$.  \textbf{(b)}~The solution for the included-sums problem with the
    $+$ operator. The highlighted square contains the included sum for the box
    in~(a). The included-sums problem requires computing the included sum for
    every element in the input matrix. \textbf{(c)}~A similar example for
    excluded sums. The highlighted square contains the excluded sum for the box
    in~(a).}
    \label{fig:2d-worked-example}
\end{figure}

\begin{figure}[t]
  \begin{center}
    \includegraphics[width=\linewidth]{./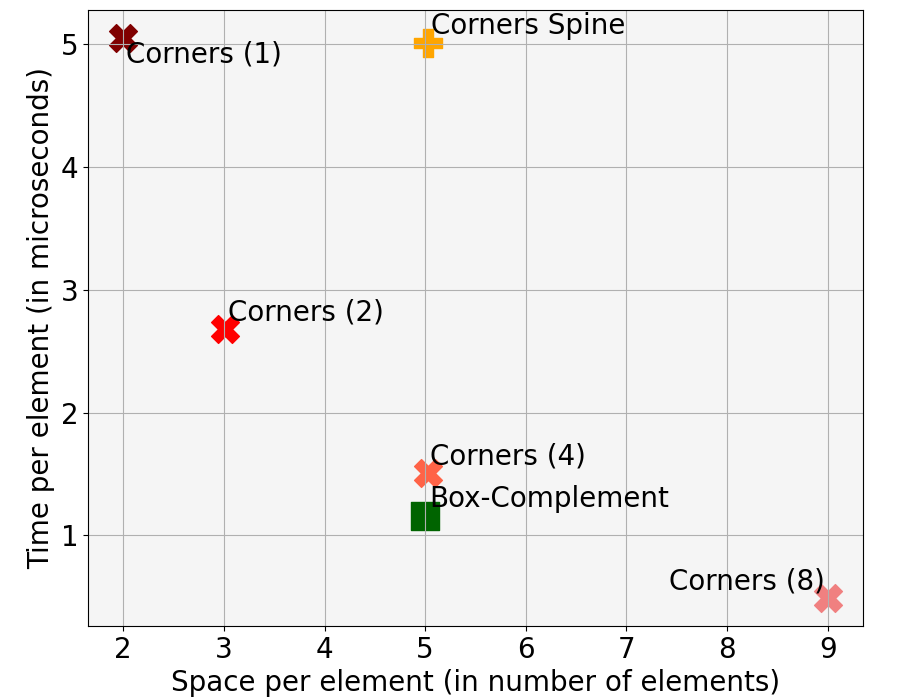}
    \end{center}
    \squeezeup
    \caption{Space and time per element of the corners and box-complement
      algorithms in 3 dimensions. We use \texttt{Corners(c)} and
      \texttt{Corners} \texttt{Spine} to denote variants of the corners
      algorithm with extra space. We set the number of elements $N = 681472 = 88^3$ and
      the box lengths $k_1 = k_2 = k_3 = 4$ (for $K = 64$).}
  \label{fig:intro-scatter}
\end{figure}

\begin{figure}[t]
  \begin{center}
    \includegraphics[width=\linewidth]{./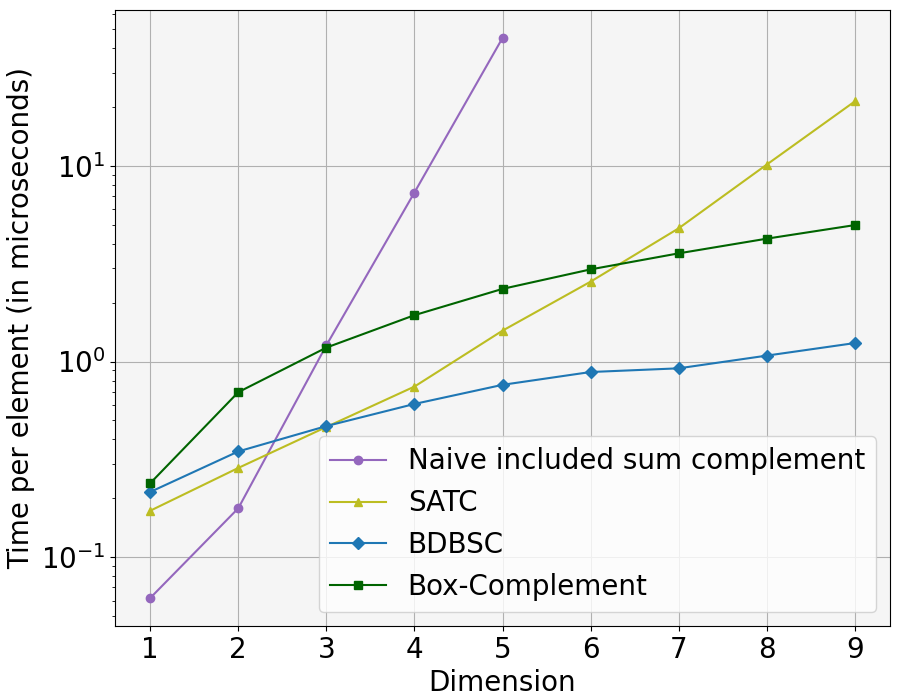}
     \end{center}
     \squeezeup
     \caption{Time per element 
       of algorithms for excluded sums in arbitrary dimensions. The number of
       elements $N$ of the tensor in each dimension was in the range
       $[2097152, 134217728]$ (selected to be a exact power of the number of
       dimensions). For each number of dimensions $d$, we set the box volume
       $K = 8^d$.}
  \label{fig:intro-nd}
\end{figure}

\subsection*{Corners Algorithm for Strong Excluded Sums}

The naive algorithm for strong excluded sums that just sums up the area of
interest for each element runs in $O(N^2)$ time in the worst case, because it
wastes work by recomputing reductions for overlapping regions. To avoid
recomputing sums, Demaine \emph{et al.}~\cite{DemaDeEd05} introduced an
algorithm that solve the strong excluded-sums problem in arbitrary dimensions,
which we will call the \defn{\demalg}.

At a high level, the \demalg partitions the excluded region for each box into
$2^d$ disjoint regions that each share a distinct vertex of the box, while
collectively filling the entire tensor, excluding the box. The algorithm heavily
depends on prefix and suffix sums to compute the reduction of elements in each
of the disjoint regions.

Since the original article that proposed the \demalg does not include a formal
analysis of its runtime or space usage in arbitrary dimensions, we present one
in~\appref{corners}. Given a $d$-dimensional tensor of $N$ elements, the \demalg
takes $\Omega(2^d N)$ time to compute the excluded sum in the best case because
there are $2^d$ \corners and each one requires $\Omega(N)$ time to add its
contribution to each excluded box. As we'll see, the bound is tight: given
$\Theta(dN)$ space, the \demalg takes $\Theta(2^d N)$ time. With $\Theta(N)$
space, the \demalg takes $\Theta(d 2^d N)$ time.

\begin{table*}
 \resizebox{\linewidth}{!}{
    \begin{tabular}{@{}lccccc@{}}
      \hline
      $\begin{array}{@{}l@{}} \text{\emph{Algorithm}} \end{array}$ & \emph{Source} & \emph{Time} & \emph{Space} & \emph{Included or Excluded?} &  \emph{Strong or Weak?} \\
      \hline
      Naive included sum & [This work] & $\Theta(KN)$ & $\Theta(N)$ & Included & Strong \\
      Naive included sum complement & [This work] & $\Theta(KN)$ & $\Theta(N)$ & Excluded & Weak \\
       Naive excluded sums & [This work] & $\Theta(N^2)$ & $\Theta(N)$ & Excluded & Strong \\
       \hline
      Summed-area table (SAT)
                                                      &  \cite{Crow84, Tapia11} & $\Theta(2^dN)$ & $\Theta(N)$
                                                                              & Included & Weak \\
      $\begin{array}{@{}l@{}}  \text{Summed-area table} \\ \quad \text{complement (SATC)}\end{array}$
                                                      &  \cite{Crow84, Tapia11} & $\Theta(2^dN)$ & $\Theta(N)$
                                                                              & Excluded & Weak \\
      \text{Corners}(c)
                                                      &  \cite{DemaDeEd05} & $\Theta((d + 1/c)2^dN)$ & $\Theta(cN)$
                                                                             & Excluded & Strong \\
Corners Spine(c)
                                                      &  \cite{DemaDeEd05} & $\Theta((2^{c+1} + 2^d(d-c) + 2^d)N)$ & $\Theta(cN)$
                                                                             & Excluded & Strong \\
      \hline
     Bidirectional box sum (BDBS)
                                                      &  [This work] & $\Theta(dN)$ & $\Theta(N)$
                                                                            & Included  & Strong \\
      $\begin{array}{@{}l@{}}  \text{Bidirectional box sum} \\ \quad \text{complement (BDBSC)}\end{array}$
                                                      &  [This work] & $\Theta(dN)$ & $\Theta(N)$
                                                                            & Excluded  & Weak \\
      $\begin{array}{@{}l@{}} \text{Box-complement} \end{array}$
                                                      &  [This work] & $\Theta(dN)$ & $\Theta(N)$
                                                                            & Excluded  & Strong \\
      \hline
    \end{tabular}
   }
    \caption{A summary of all algorithms for excluded sums in this paper. All algorithms take as input a $d$-dimensional tensor of $N$ elements. We
      include the runtime, space usage, whether an algorithm solves the included- or excluded-sums problem, and whether it solves
      the strong or weak version of the problem. We use $K$ to denote the volume
      of the box (in the runtime of the naive algorithm). The \demalg takes a
      parameter $c$ of extra space that it uses to improve its
      runtime.} 
\label{tab:all-algs}
\end{table*}

\subsection*{Contributions}

This paper presents algorithms for included and strong excluded sums in
arbitrary dimensions that improve the runtime from exponential to linear in the
number of dimensions. For strong included sums, we introduce the
\defn{bidirectional box-sum} (BDBS) algorithm that uses prefix and suffix sums
to compute the included sum efficiently. The BDBS algorithm can be easily
extended into an algorithm for weak excluded sums, which we will call the
\defn{bidirectional box-sum complement} (BDBSC) algorithm. For strong excluded
sums, the main insight in this paper is the formulation of the excluded sums in
terms of the ``box complement'' on which the \defn{\longex} algorithm is based.
\tabref{all-algs} summarizes all algorithms considered in this paper.

~\figref{intro-scatter} illustrates the performance and space usage of the
box-complement algorithm and variants of the 3D corners algorithm. Since
  the paper that introduced the corners algorithm stopped short of a general
  construction in higher dimensions, the 3D case is the highest dimensionality
  for which we have implementations of the box-complement and corners
  algorithm. The 3D case is of interest because applications such as the FMM
  often present in three dimensions~\cite{GumeDu05, ChengCrGi06}. We find that
the box-complement algorithm outperforms the corners algorithm by about
$1.4\times$ when given similar amounts of space, though the corners algorithm
with twice the space as box-complement is $2\times$ faster.  The box-complement
algorithm uses a fixed (constant) factor of extra space, while the corners
algorithm can use a variable amount of space.  We found that the performance of
the corners algorithm depends heavily on its space usage. We use
\texttt{Corners(c)} to denote the implementation of the corners algorithm that
uses a factor of $c$ in space to store leaves in the computation tree and gather
the results into the output. Furthermore, we also explored a variant of the
\demalg in~\appref{corners}, called \texttt{Corners} \texttt{Spine}, which uses
extra space to store the spine of the computation tree and asymptotically reduce
the runtime.

~\figref{intro-nd} demonstrates how algorithms for weak excluded sums scale with
dimension. We omit the corners algorithm because the original paper stopped
short of a construction of how to find the corners in higher
dimensions. We also omit an evaluation of included-sums algorithms
  because the relative performance of all algorithms would be the same. The
naive and summed-area table perform well in lower dimensions but exhibit
crossover points (at 3 and 6 dimensions, respectively) because their runtimes
grow exponentially with dimension. In contrast, the BDBS and box-complement
algorithms scale linearly in the number of dimensions and outperform the
summed-area table method by at least $1.3\times$ after 6 dimensions. The BDBS
algorithm demonstrates the advantage of solving the weak problem, if you can,
because it is always faster than the box-complement algorithm, which doesn't
exploit an operator inverse.  Both algorithms introduced in this paper
outperform existing methods in higher dimensions, however.

To be specific, our contributions are as follows:

\begin{itemize}
\item the bidirectional box-sum (BDBS) algorithm for strong included sums;
\item the bidirectional box-sum  complement (BDBSC) algorithm for weak excluded sums;

\item the \longex algorithm for strong excluded sums;

\item theorems showing that, for a $d$-dimensional tensor of size $N$, these algorithms all run in $\Theta(d N)$ time and $\Theta(N)$ space;

\item implementations of these algorithms in \texttt{C++}; and

\item empirical evaluations showing that the
  box-complement algorithm outperforms the corners algorithm in 3D given similar space and that both the BDBSC algorithm and box-complement algorithm outperform the SATC algorithm in higher dimensions.
\end{itemize}

\subsection*{Outline} The rest of the paper is organized as
follows. \secref{prelim} provides necessary preliminaries and notation to
understand the algorithms and proofs.~\secref{incsum} presents an efficient
algorithm to solve the included-sums problem, which will be used as a key
subroutine in the box-complement algorithm.~\secref{exsum} formulates the
excluded sum as the ``box-complement,'' and ~\secref{boxcompalg} describes and
analyzes the resulting \longex \alg.  ~\secref{experiments} presents an
empirical evaluation of algorithms for excluded sums. Finally, we provide
concluding remarks in~\secref{conclusion}.



\section{Preliminaries}\label{sec:prelim}

This section reviews tensor preliminaries used to describe algorithms in later
sections. It also formalizes the included- and excluded-sums problems in terms
of tensor notation. Finally, it describes the prefix- and suffix-sums primitive
underlying the main algorithms in this paper.

\subsection*{Tensor Preliminaries}
We first introduce the coordinate and tensor notation we use to explain our algorithms and why they work.  At a high level, tensors are
$d$-dimensional arrays of elements over some monoid $(S, \oplus,
e)$.  In this paper, tensors are represented by capital script letters (e.g.,
$\tA$) and vectors are represented by lowercase boldface letters (e.g.,~$\vec{a}$).

We shall use the following terminology.  A $d$-dimensional \defn{coordinate domain} $U$ is is the cross
product $U = U_1 \times U_2 \times \ldots \times U_d$, where $U_i = \set{1, 2, \ldots, n_i}$ for $n_i \geq 1$.  The \defn{size} of $U$ is $n_1 n_2\cdots n_d$.  Given a coordinate domain $U$ and a monoid $(S, \oplus, e)$ as defined in~\secref{intro}, a \defn{tensor}
$\tA$ can be viewed for our purposes as a mapping $\tA\rt U \rightarrow S$.  That is, a tensor maps a coordinate
$\vx \in U$ to an \defn{element} $\tA[\vx] \in S$.  The \defn{size} of a tensor is the
size of its coordinate domain.  We omit the coordinate domain $U$ and
monoid $(S, \oplus, e)$ when they are clear from context.

We use Python-like \defn{colon notation} $x\rt x'$, where $x \leq x'$, to denote the half-open interval $[x, x')$ of coordinates along a particular dimension.  If $x\rt x'$ would extend outside of $[1,n]$, where $n$ is the maximum coordinate, it denotes only the coordinates actually in the interval, that is, the interval $\max\set{1, x} \rt \min\set{n+1, x'}$.  If the lower bound is missing, as in $\rt x'$, we interpret the interval as $1\rt x'$, and similarly, if the upper bound is missing, as in $x\rt$, it denotes the interval $[x,n]$. If both bounds are missing, as in $\rt$, we interpret the interval as the whole coordinate range $[1,n]$.  

We can use colon notation when indexing a tensor \tA to define \defn{subtensors}, or \defn{boxes}. 
For example, $\tA[3\rt 5, 4\rt 6]$ denotes the elements of \tA at coordinates $(3, 4), (3, 5), (4, 4), (4, 5)$. 
For full generality, a box $B$ \defn{cornered} at coordinates
$\vx = (x_1, x_2, \ldots, x_d)$ and $\vx' = (x_1', x_2', \ldots, x_d')$, where
$x_i < x_i'$ for all $i = 1, 2, \ldots, d$, is the box
$(x_1\rt x_1', x_2\rt x_2', \ldots, x_d\rt x_d')$.  Given a \defn{box size}
$\vk = (k_1, \ldots, k_d)$, a \defn{$\vk$-box} \defn{cornered} at coordinate $\vx$ is
the box cornered at $\vx$ and $\vx' = (x_1 + k_1, x_2 + k_2, \ldots, x_d + k_d)$.
A \defn{(tensor) row} is a box with a single value in each coordinate position in the colon notation, except for one position, which includes that entire dimension. 
For example, if $\vec{x} = (x_1, x_2, \ldots, x_d)$ is a coordinate of a tensor $\tA$, then $\tA[x_1, x_2, \ldots, x_{i-1}, \rt, x_{i+1}, x_{i+2}, \ldots, x_d]$ denotes a row along dimension~$i$.  

The colon notation can be combined with the reduction operator $\oplus$ to indicate the reduction of all elements in a subtensor:
\squeezeupsmall
\begin{align*}
\lefteqn{\bigoplus \tA[x_1\rt x'_1, x_2\rt x'_2,\ldots, x_d\rt x'_d] } \\
& & = \bigoplus_{y_1\in [x_1, x'_1)} \bigoplus_{y_2\in [x_2, x'_2)} \cdots \bigoplus_{y_d\in [x_d, x'_d)} \tA[y_1,  y_2, \ldots, y_d] \ . 
\end{align*}

\subsection*{Problem Definitions}

We can now formalize the included- and excluded-sums problems from~\secref{intro}.

\begin{definition}[Included and Excluded Sums] 
  An algorithm for the \defn{included-sums problem} takes as input a
  $d$-dimensional tensor $\tA\rt  U \rightarrow S$ with size $N$ and a box
  size $\vec{k} = (k_1, k_2, \ldots, k_d)$.  It produces a new tensor  $\tA'\rt  U \rightarrow S$ such that every output element $\tA'[\vx]$ holds the reduction under $\oplus$ of elements within the $\vk$-box of $\tA$ cornered at~$\vx$.  An algorithm for the \defn{excluded-sums problem} is defined similarly, except that the reduction is of elements outside the $\vk$-box cornered at~$\vx$.
\end{definition}
 In other words, an included-sums algorithm computes, for all $\vx = (x_1, x_2, \ldots, x_d) \in U$, the value $\tA'[\vx] = \bigoplus \tA[x_1\rt x_1 + k_1, x_2\rt x_2 + k_2, \ldots, x_d \rt  x_d + k_d]$.  It's messier to write the output of an excluded-sums problem using colon notation, but fortunately, our proofs do not rely on~it.

As we have noted in~\secref{intro}, there are weak and strong versions of both
problems which allow and do not allow an operator inverse,
respectively. 

\subsection*{Prefix and Suffix Sums}

The \defn{prefix-sums operation}~\cite{blelloch} takes an array
$\vec{a} = (a_1, a_2, \ldots, a_{n})$ of $n$ elements and returns the ``running
sum'' 
$ \vec{b} = (b_1, b_2, \ldots, b_{n})\ ,$ where
\begin{equation}
b_k= \cases{a_1 &  \cif{$k=1$},\\
  a_k \oplus b_{k-1} & \cif{$k > 1$}\ .}
\label{eq:prefix-sum-def}
\end{equation}

Let \prefix\xspace denote the algorithm that directly implements the recursion
in Equation~\ref{eq:prefix-sum-def}. Given an array $\vec{a}$ and indices
$\mathit{start} \leq \mathit{end}$, the function \prefix$(\vec{a}, \mathit{start}, \mathit{end})$ computes the prefix sum
in the range $[\mathit{start}, \mathit{end}]$ of $\vec{a}$ in $O(\mathit{end} - \mathit{start})$ time.  Similarly, the
\defn{suffix-sums operation} is the reverse of the prefix sum and computes the
sum right-to-left rather than left-to-right.  Let
\suffix$(\vec{a}, \mathit{start}, \mathit{end})$ be the corresponding algorithm for suffix sums.

\section{Included Sums}\label{sec:incsum}

This section presents the \defn{\incsum (BDBS) algorithm} to compute the
included sum along an arbitrary dimension, which is used as a main subroutine in
the box-complement algorithm for excluded sums. As a warm-up, we will first describe
how to solve the included-sums problem in one dimension and extend the technique
to higher dimensions. We include the one-dimensional case for clarity,
  but the main focus of this paper is the multidimensional case.

We will sketch the subroutines for higher dimensions in this section.
The full version of the paper includes all the pseudocode and omitted proofs for
BDBS in 1D.  We sketch the key subroutines in higher dimensions and omit them
from this paper because they straightforwardly extend the computation from 1
dimension.

\subsection*{Included Sums in 1D}

Before investigating the included sums in higher dimensions, let us first turn
our attention to the 1D case for ease of understanding.  \figref{incsum-1d-code}
presents an algorithm \onedinc which takes as input a list $A$ of length $N$ and
a (scalar) box size\footnote{For simplicity in the algorithm descriptions and
  pseudocode, we assume that $n_i \bmod k_i = 0$ for all dimensions
  $i = 1, 2, \ldots, d$. In implementations, the input can either be padded with
  the identity to make this assumption hold, or it can add in extra code to deal
  with unaligned boxes.} $k$ and outputs a list $A'$ of corresponding included
sums. At a high level, the \onedinc algorithm generates two intermediate lists
$A_p$ and $A_s$, each of length $N$, and performs $N/k$ prefix and suffix sums
of length $k$ on each intermediate list. By construction, for
$x = 1, 2, \ldots, N$, we have $A_p[x] = A[k \floor{x/k} \rt x+ 1]$, and
$A_s[x] = A[x\rt k \ceil{ (x+1)/k }]$.

Finally, \onedinc uses $A_p$ and $A_s$ to compute the included sum of size $k$
for each coordinate in one pass.~\figref{1d-incsum-presuf} illustrates the
ranged prefix and suffix sums in \onedinc, and~\figref{1d-incsum-example}
presents a concrete example of the computation.

\begin{figure}[t]
  \begin{center}
    \includegraphics[width=.9\linewidth]{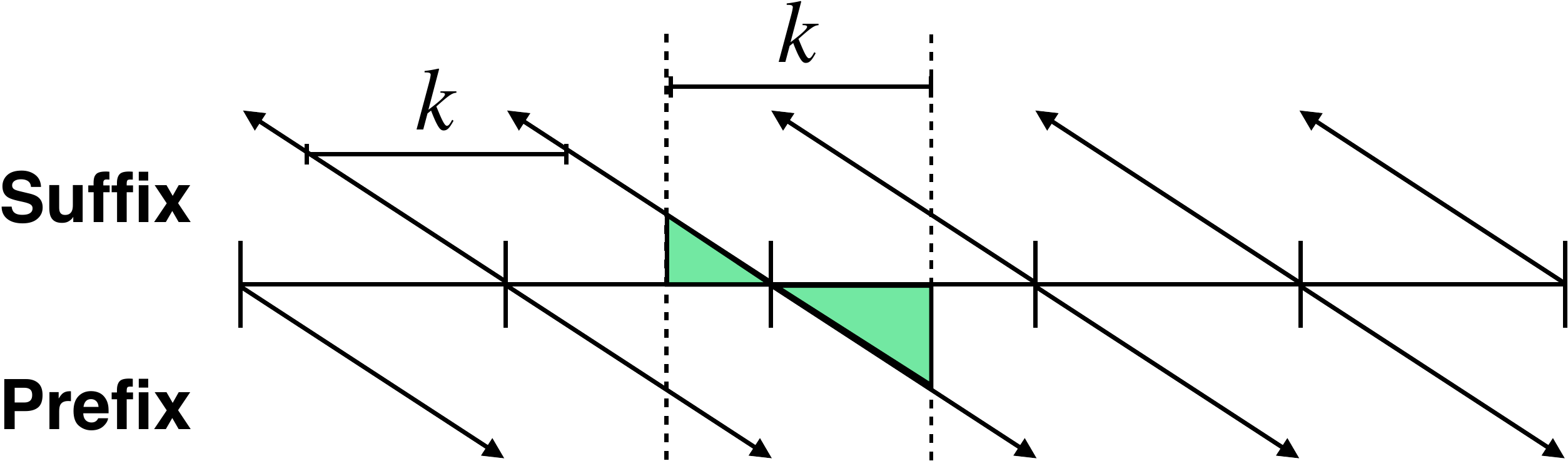}
    \end{center}
      \caption{An illustration of the computation in the \incsum. The
    arrows represent prefix and suffix sums in runs of size $k$, and the shaded
    region represents the prefix and suffix components of the region of size $k$
  outlined by the dotted lines.}
  \label{fig:1d-incsum-presuf}
\end{figure}

\begin{figure}[t]
  \begin{center}
    \includegraphics[width=.8\linewidth]{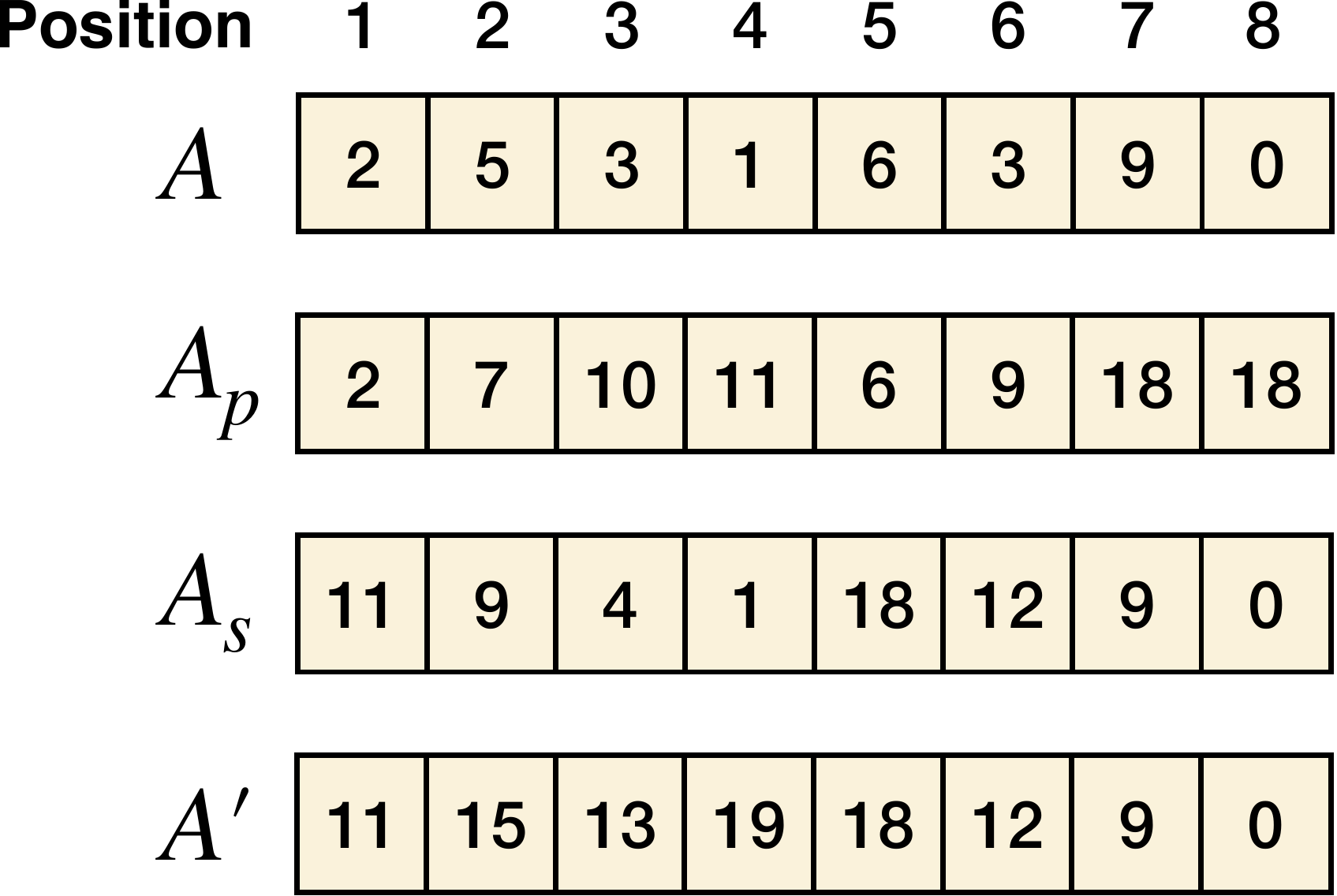}
    \end{center}
    \squeezeup
    \caption{An example of computing the 1D included sum using the
      \incsum, where $N=8$ and $k=4$.  The input array
      is $A$, the $k$-wise prefix and suffix sums are stored in $A_p$ and~$A_s$,
      respectively, and the output is in~$A'$.}
  \label{fig:1d-incsum-example}
\end{figure}

\onedinc solves the included-sums problem on an array of size $N$ in $\Theta(N)$
time and $\Theta(N)$ space.  First, it uses two temporary arrays to compute the
prefix and suffix as illustrated in~\figref{1d-incsum-presuf} in $\Theta(N)$
time. It then makes one more pass through the data to compute the included sum,
requiring $\Theta(N)$ time.  ~\figref{incsum-1d-code} in~\appref{incsum-par}
contains the full pseudocode for \onedinc.


\subsection*{Generalizing to Arbitrary Dimensions}

The main focus of this work is multidimensional included and excluded
sums. Computing the included sum along an arbitrary dimension is almost exactly
the same as computing it along 1 dimension in terms of the underlying ranged
prefix and suffix sums.  We sketch an algorithm \shortinc that generalizes
\onedinc to arbitrary dimensions.

Let \tA be a $d$-dimensional tensor with $N$ elements and let $\vk$ be a box
size. The \shortinc algorithm computes the included sum along dimensions
$i = 1, 2, \ldots, d$ in turn.  After performing the included-sum computation
along dimensions $1, 2, \ldots, i$, every coordinate in the output $\tA_i$
contains the included sum in each dimension up to $i$:

  \begin{eqnarray*}
\lefteqn{    \tA_i[x_1, x_2, \ldots, x_d ]= } \\
 & &  \bigoplus\tA[\underbrace{x_1:x_1 + k_2, \ldots, x_i:x_i + k_i}_{\text{$i$}},
    \underbrace{x_{i+2},\ldots, x_{d}}_{\text{$d-i$}}].
    \end{eqnarray*}

Overall, \shortinc computes the full included sum of a tensor with $N$ elements
in $\Theta(dN)$ time and $\Theta(N)$ space by performing the included sum along
each dimension in turn.

Although we cannot directly use \shortinc to solve the strong excluded-sums
problem, the next sections demonstrate how to use the \shortinc technique as a
key subroutine in the box-complement algorithm for strong excluded sums.


\section{Excluded Sums and the Box Complement}\label{sec:exsum}

The main insight in this section is the formulation of the excluded sum as the
recursive ``box complement''. We show how to partition the excluded region
into $2d$ non-overlapping parts in $d$ dimensions. This decomposition of the
excluded region underlies the box-complement for strong excluded sums in the
next section.

First, let's see how the formulation of the ``box complement'' relates to the
excluded sum.  At a high level, given a box $B$, a coordinate $\vx$ is in the
``$i$-complement'' of $B$ if and only if $\vx$ is ``out of range'' in some
dimension $j \leq i$, and ``in the range'' for all dimensions greater than~$i$.

\begin{definition}[Box Complement]
  \label{def:box-complement}
  Given a $d$-dimensional coordinate domain $U$ and a dimension
  $i \in \{1, 2, \ldots, d\}$, the \defn{i-complement} of a box $B$ cornered at
  coordinates $\vx = (x_1, \ldots, x_\dimension)$ and
  $ \vx' = (x_1', \ldots, x_d')$ is the set 
  \squeezeupsmall
  \begin{align*}
    C_i(B) & = \{ (y_1, \ldots, y_d) \in U : \text{ there exists } j \in [1, i] \\
           &   \text{ such that } y_j \notin [x_j, x_j'), \text{ and }
             \text{ for all } m \in [i+1, d], \\
           & y_m \in [x_m,  x'_m)  \}.
    \end{align*}
\end{definition}

Given a box $B$, the reduction of all elements at coordinates in $C_d(B)$ is
exactly the excluded sum with respect to $B$. The box complement recursively
partitions an excluded region into disjoint sets of coordinates.

\begin{theorem}[Recursive Box-complement]
  \label{thm:excl-box-complement}
  Let $B$ be a box cornered at coordinates $\vx = (x_1, \ldots, x_\dimension)$
  and $ \vx' = (x_1', \ldots, x_d')$ in some coordinate domain $U$.  The
  $i$-complement of $B$ can be expressed recursively in terms of the
  $(i-1)$-complement of $B$ as follows:
  \begin{eqnarray*}
\lefteqn{  C_i(B) = (\underbrace{\rt, \ldots, \rt, \rt x_i }_i,
    \underbrace{x_{i+1}\rt x_{i+1}',\, \ldots,\, x_d\rt x_d'}_{d-i} \,) \,\cup } \\
  & &    (\, \underbrace{\rt , \ldots, \rt }_{i-1}, x_i'\rt, \underbrace{x_{i+1}\rt
        x_{i+1}', \ldots, x_d\rt x_d'}_{d-i} \,) \cup C_{i-1}(B),
\end{eqnarray*}
  where $C_0(B) = \emptyset$.
\end{theorem}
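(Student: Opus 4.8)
The plan is to prove the set equality directly, by partitioning the coordinates of $C_i(B)$ according to whether they lie in or out of the box's range in dimension~$i$. First I would rewrite the two colon-notation regions on the right-hand side as explicit coordinate sets. Writing $R_1$ and $R_2$ for the first and second regions, the colon notation together with the clamping convention (under which $\rt x_i$ denotes $[1,x_i)$ and $x_i'\rt$ denotes $[x_i', n_i]$) gives
\begin{align*}
R_1 &= \{\vy\in U : y_i < x_i,\ y_m\in[x_m,x_m')\ \forall m>i\},\\
R_2 &= \{\vy\in U : y_i \ge x_i',\ y_m\in[x_m,x_m')\ \forall m>i\},
\end{align*}
where $\vy = (y_1,\dots,y_d)$ and the coordinates $y_1,\dots,y_{i-1}$ are unconstrained (ranging over all of $U_1,\dots,U_{i-1}$) in both regions. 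The base case $i=0$ holds because the index range $[1,0]$ is empty, so no $j$ witnesses the existential condition in the definition of the box complement; hence $C_0(B)=\emptyset$, matching the stated convention.

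For the step $i\ge 1$, I would fix an arbitrary $\vy\in U$ and split into two exhaustive cases according to dimension~$i$. In \textbf{Case A}, $y_i\notin[x_i,x_i')$: since $x_i<x_i'$ and $[x_i,x_i')\subseteq U_i$, being out of range in dimension~$i$ is equivalent to $y_i<x_i$ or $y_i\ge x_i'$, so this case contributes exactly $R_1\cup R_2$. Moreover the witness $j=i$ automatically satisfies the existential clause of $C_i(B)$, so every such $\vy$ with $y_m\in[x_m,x_m')$ for all $m>i$ lies in $C_i(B)$, and conversely. In \textbf{Case B}, $y_i\in[x_i,x_i')$: now $j=i$ is not a witness, so $\vy\in C_i(B)$ requires some witness $j\in[1,i-1]$ with $y_j\notin[x_j,x_j')$, together with $y_m\in[x_m,x_m')$ for all $m>i$; combined with $y_i\in[x_i,x_i')$, this is precisely the defining condition of $C_{i-1}(B)$ (out of range in some $j\le i-1$, and in range for all $m\ge i$). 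Taking the union over the two cases yields $C_i(B)=(R_1\cup R_2)\cup C_{i-1}(B)$.

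The main obstacle is bookkeeping rather than depth: I would take care to confirm that the colon-notation regions translate to the intended set conditions — in particular that $[1,x_i)\cup[x_i',n_i]$ is exactly the out-of-range set $U_i\setminus[x_i,x_i')$ in dimension~$i$ (using $x_i<x_i'$ and the clamping convention), and that the in-range constraints on dimensions $m>i$ are carried identically through $R_1$, $R_2$, and $C_{i-1}(B)$. As a byproduct of the case split, the three sets $R_1$, $R_2$, and $C_{i-1}(B)$ are pairwise disjoint: Cases A and B are mutually exclusive (they disagree on whether $y_i\in[x_i,x_i')$), and $R_1,R_2$ disagree on the sign of $y_i-x_i$. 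Hence the decomposition is in fact a partition — a fact the box-complement algorithm relies on when it reduces the regions without double counting.
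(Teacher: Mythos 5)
Your proof is correct and follows essentially the same route as the paper's: both arguments come down to checking that the witness dimension $j$ in the definition of the $i$-complement is either $j=i$ (yielding the two new colon-notation regions) or $j\le i-1$ (yielding $C_{i-1}(B)$), with the same translation of the clamped colon notation into the conditions $y_i<x_i$ and $y_i\ge x_i'$. Your packaging as a single exhaustive case split on whether $y_i\in[x_i,x_i')$ obtains both inclusions at once rather than in separate forward and backward passes, and the disjointness remark you add is a correct bonus that the paper only asserts later when unrolling the recursion into $2d$ disjoint parts.
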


\begin{proof}
  For simplicity of notation, let $\text{RHS}_i(B)$ be the right-hand side of
  the equation in the statement of~\thmref{excl-box-complement}. Let
  $\vec{y} = (y_1, \ldots, y_d)$ be a coordinate.  In order to show the equality,
  we will show that $\vy \in C_i(B)$ if and only if $\vy \in \text{RHS}_i(B)$.

  \noindent \textbf{Forward Direction: } $\vec{y} \in C_i(B) \rightarrow \vec{y} \in \text{RHS}_i(B)$. \\
  We proceed by case analysis when $\vy \in C_i(B)$. Let $j \leq i$ be the
  highest dimension at which \vy is ``out of range,'' or where $y_j < x_j$ or
  $y_j \geq x_j'$.

   \begin{caseof}
     \case{$j = i$.}{\defref{box-complement} and $j = i$ imply that either
       $y_i < x_i$ or $y_i \geq x_i'$, and $x_m \leq y_m \leq x_m'$ for all
       $m > i$. By definition, $y_i < x_i$ implies
       $\vy \in \paren{\rt, \ldots, \rt , \rt x_i, x_{i+1}\rt x_{i+1}', \ldots,
         x_d\rt x_d'}$. Similarly, $y_i \geq x_i'$ implies
       $\vy \in \paren{\rt , \ldots, \rt , x_i'\rt  x_{i+1}\rt x_{i+1}', \ldots,
         x_d\rt x_d'}$. These are exactly the first two terms in $\text{RHS}_i(B)$.}
     \case{$j < i$.}{\defref{box-complement} and $j < i$
       imply that $\vy \in C_{i-1}(B)$.}
   \end{caseof}

\noindent \textbf{Backwards Direction:
}$\vec{y} \in \text{RHS}_i(B) \rightarrow \vec{y} \in C_i(B)$. \\
  We again proceed by case analysis.

    \begin{caseof}
      \case{$\vy \in \paren{\rt , \ldots, \rt , \rt x_i,
   x_{i+1}\rt x_{i+1}', \ldots, x_d\rt x_d'} \text { or } \\ \vy \in \paren{\rt , \ldots, \rt , x_i'\rt ,
   x_{i+1}\rt x_{i+1}', \ldots, x_d\rt x_d'}$.}{
 ~\defref{box-complement} implies $\vy \in C_i(B)$
 because there
        exists some $j \leq i$ (in this case, $j = i$) such that $y_j < x_j$ and $x_m \leq y_m < x_m'$
        for all $m > i$.}
      \case{$\vy \in  C_{i-1}(B)$.}{~\defref{box-complement} implies that there exists $j$ in the range $1 \leq j \leq i-1$ such that
        $y_j < x_j$ or $y_j \geq x_j'$ and that for all
        $m \geq i$, we have $x_m \leq y_m < x_m'$. Therefore, $\vy \in C_{i-1}(B)$ implies
        $\vy \in C_i(B)$ since there exists some $j \leq i$ (in this case,
        $j < i$) where $y_j < x_j$ or $y_j \geq x_j'$ and
        $x_m \leq y_m < x_m'$ for all $m > 1$.}
   \end{caseof}
   Therefore, $ C_i(B)$ can be recursively expressed as $\text{RHS}_i(B)$.
  \end{proof}

  In general, unrolling the recursion in~\thmref{excl-box-complement} yields
  $2d$ disjoint partitions that exactly comprise the excluded sum with respect
  to a box.

  \begin{corollary}[Excluded-sum Components]
    \label{cor:excl-components}
    The excluded sum can be represented as the union of $2d$ disjoint sets of
    coordinates as follows:
  \squeezeupsmall
      \begin{eqnarray*}
        \lefteqn{  C_d(B) = \bigcup^d_{i=1}\left((\underbrace{\rt , \ldots, \rt }_{i-1}, \rt x_i,
        \underbrace{x_{i+1}\rt x_{i+1}', \ldots, x_d\rt x_d'}_{d-i})\right. }  \\
        & & \left. \cup \,
            (\underbrace{\rt , \ldots, \rt }_{i-1}, x_i+k_i\rt , \underbrace{x_{i+1}\rt x_{i+1}',
            \ldots, x_d\rt x_d'}_{d-i})\right)\ .
            \end{eqnarray*}
  \end{corollary}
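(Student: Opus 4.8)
The plan is to unroll the recursion established in~\thmref{excl-box-complement}. Starting from $C_d(B)$ and applying the recursive identity repeatedly, I would peel off the two explicit terms at each level while carrying the remainder as $C_{i-1}(B)$, descending from $i = d$ down to the base case $C_0(B) = \emptyset$. Formally this is a short induction on $i$: the inductive hypothesis is that $C_i(B)$ equals the union over $\ell = 1, \ldots, i$ of the two terms appearing in~\thmref{excl-box-complement} (indexed by $\ell$), and the inductive step is exactly one application of the recursive identity. Telescoping all $d$ levels collects precisely the $2d$ sets claimed in the corollary, establishing the set equality $C_d(B) = \text{RHS}$.

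The one piece of bookkeeping is to rewrite the second (upper) term at each level using the fact that $B$ is a $\vk$-box cornered at $\vx$, so that $x_i' = x_i + k_i$. Substituting $x_i'\rt = x_i + k_i\rt$ turns the upper partition at level~$i$ into $(\underbrace{\rt, \ldots, \rt}_{i-1}, x_i + k_i\rt, x_{i+1}\rt x_{i+1}', \ldots, x_d\rt x_d')$, which matches the form stated in~\corref{excl-components}; the lower partition $(\underbrace{\rt, \ldots, \rt}_{i-1}, \rt x_i, x_{i+1}\rt x_{i+1}', \ldots, x_d\rt x_d')$ already matches directly up to harmless reformatting of the underbrace.

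The remaining obligation --- and the only part that~\thmref{excl-box-complement} does not already hand me --- is to verify that the $2d$ sets are pairwise \emph{disjoint}. I would index each set by a pair $(i, s)$, where $i \in \{1, \ldots, d\}$ names the dimension and $s \in \{\mathrm{low}, \mathrm{high}\}$ records whether the $i$th coordinate is constrained to $y_i < x_i$ (low) or $y_i \geq x_i'$ (high). Two cases arise. When two sets share the same dimension $i$ but differ in $s$, they are disjoint because a coordinate cannot simultaneously satisfy $y_i < x_i$ and $y_i \geq x_i'$, as $x_i < x_i'$. When the sets use distinct dimensions, say $(i, \cdot)$ and $(i', \cdot)$ with $i < i'$, I observe that every coordinate in the $(i, \cdot)$ set has $y_{i'} \in [x_{i'}, x_{i'}')$, since every dimension $m > i$ is pinned to the in-range interval $x_m\rt x_m'$, whereas every coordinate in the $(i', \cdot)$ set has $y_{i'} \notin [x_{i'}, x_{i'}')$ by construction; hence the two sets cannot share any coordinate.

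I expect the cross-dimension disjointness argument to be the crux: it is the step that genuinely uses the asymmetric structure of the box complement --- each level constrains all higher dimensions to lie in range --- and it is what upgrades the union from a mere cover to a genuine partition. The unrolling and the $x_i' = x_i + k_i$ substitution are routine once the recursion is in hand.
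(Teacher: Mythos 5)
Your proposal is correct and follows the same route the paper takes: the paper gives no explicit proof of this corollary beyond the remark that unrolling the recursion of~\thmref{excl-box-complement} yields the $2d$ disjoint partitions, which is exactly your induction. Your explicit verification of pairwise disjointness (same-dimension sets separated by $y_i < x_i$ versus $y_i \geq x_i'$, and cross-dimension sets separated by whether $y_{i'}$ lies in $[x_{i'}, x_{i'}')$) and the substitution $x_i' = x_i + k_i$ are details the paper leaves implicit, and both are handled correctly.
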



We use the box-complement formulation in the next section to efficiently compute
the excluded sums on a tensor by reducing in disjoint regions of the tensor.

\section{Box-Complement Algorithm}\label{sec:boxcompalg}

\begin{figure*}[t]
  \begin{center}
    \includegraphics[width=\linewidth]{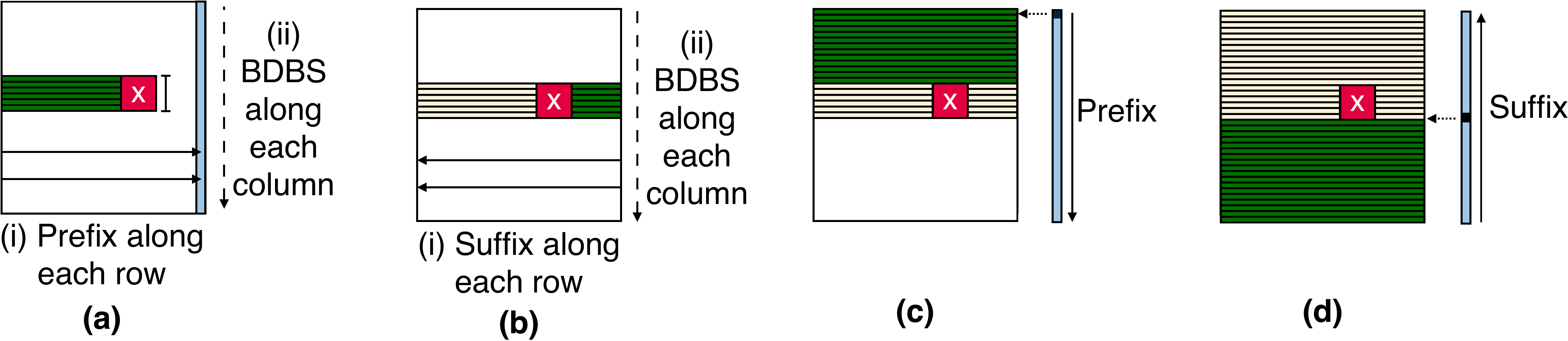}
    \end{center}
     \squeezeup
    \caption{Steps for computing the excluded sum in 2 dimensions with
      included sums on prefix and suffix sums. The steps are labeled in the
      order they are computed. The $1$-complement \textbf{(a)} prefix and
      \textbf{(b)} suffix steps perform a prefix and suffix along dimension $1$
      and an included sum along dimension $2$. The numbers in
      \textbf{(a)},\textbf{(b)} represent the order of subroutines in those
      steps. The $2$-complement \textbf{(c)} prefix and \textbf{(d)} suffix
      steps perform a prefix and suffix sum on the reduced array, denoted by the
      blue rectangle, from step $\textbf{(a)}$.  The red box denotes the
      excluded region, and solid lines with arrows denote prefix or suffix sums
      along a row or column. The long dashed line represents the included sum
      along each column.}
    \label{fig:2d-exsum-steps}
  \end{figure*}

  \begin{figure}[t]
  \begin{center}
    \includegraphics[width=\linewidth]{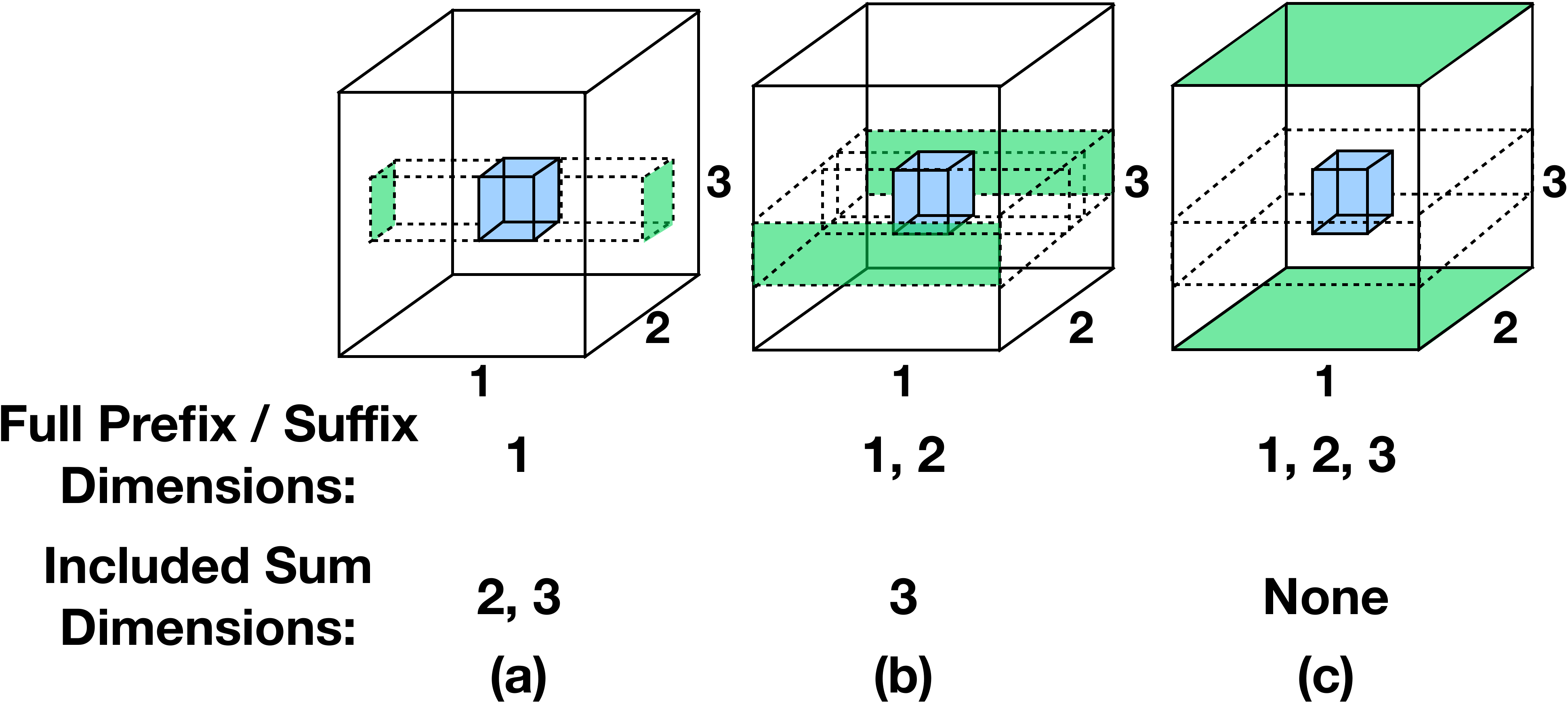}
    \end{center}
    \squeezeup
    \caption{An example of the recursive box-complement in 3 dimensions with
       dimensions labeled $1, 2, 3$. The subfigures \textbf{(a)}, \textbf{(b)}, and
      \textbf{(c)} illustrate the $1$-, $2$-, and $3$-complement,
      respectively. The blue region represents the coordinates inside the box,
      and the regions outlined by dotted lines represent the partitions defined
      by~\corref{excl-components}. For each partition, the face against the edge
      of the tensor is highlighted in green.}
  \label{fig:3d-exsum-regions}
\end{figure}

This section describes and analyzes the \longex algorithm for strong excluded
sums, which efficiently implements the dimension reduction in~\secref{exsum}.
The \longex algorithm relies heavily on prefix, suffix, and included sums as
described in~\secreftwo{prelim}{incsum}.

Given a $d$-dimensional tensor $\tA$ of size $N$ and a box size $\vk$, the
\longex \alg solves the excluded-sums problem with respect to $\vk$ for
coordinates in $\tA$ in $\Theta(\dimension N)$ time and $\Theta(N)$
space.~\appref{exsum-par} contains all omitted pseudocode and proofs for the
serial box-complement algorithm.


\subsection*{Algorithm Sketch}
At a high level, the \shortex algorithm proceeds by dimension reduction. That
is, the algorithm takes $d$ dimension-reduction steps, where each step adds two
of the components from~\corref{excl-components} to each element in the output
tensor. In the $i$th dimension-reduction step, the \shortex algorithm computes the
$i$-complement of $B$ (\defref{box-complement}) for all coordinates in the
tensor by performing a prefix and suffix sum along the $i$th dimension and then
performing the BDBS technique 
along the remaining $d-i$ dimensions. After the $i$th
dimension-reduction step, the \longex \alg operates on a tensor of $d-i$
dimensions because $i$ dimensions have been reduced so far via prefix
sums. \figref{2d-exsum-steps} presents an example of the dimension reduction in
2 dimensions, and~\figref{3d-exsum-regions} illustrates the recursive
box-complement in 3 dimensions.

\subsection*{Prefix and Suffix Sums} In the $i$th dimension reduction step, the
\shortex algorithm uses prefix and suffix sums along the $i$th dimension to
reduce the elements ``out of range'' along the $i$th dimension in the
$i$-complement. That is, given a tensor $\tA$ of size
$N = n_1 \cdot n_2 \cdots n_d$ and a number $i < d$ of dimensions reduced so
far, we define a subroutine \predimfull that fixes the first $i$ dimensions at
$n_1, \ldots, n_i$ (respectively), and then computes the prefix sum along
dimension $i+1$ for all remaining rows in dimensions $i+2, \ldots, d$.  The
pseudocode for \predimfull can be found in~\figref{full-presum}
in~\appref{exsum-par}, and the proof that it incurs
$O\paren{\prod_{j=i+1}^{d} n_j}$ time can be found in~\appref{exsum-par}.

The subroutine \predim computes the reduction of elements ``out of range'' along
dimension~$i$. That is, after \predimfull, for each coordinate
$x_{i+1} = 1, 2, \ldots,n_{i+1}$ along dimension $i+1$, every coordinate in the
(dimension-reduced) output $\tA'$ contains the prefix up to that coordinate in
dimension~$i+1$:
\squeezeupsmall
  \begin{eqnarray*}
\lefteqn{    \tA'[\underbrace{n_1,\ldots, n_{i}}_{\text{$i$}}, x_{i+1},
    \underbrace{x_{i+2},\ldots, x_{d}}_{\text{$d-i-1$}} ]= } \\
 & &  \bigoplus\tA[\underbrace{n_1,\ldots, n_{i}}_{\text{$i$}}, \rt x_{i+1}+1,
    \underbrace{x_{i+2},\ldots, x_{d}}_{\text{$d-i-1$}}].
    \end{eqnarray*}

    Since the similar subroutine \sufdim has almost exactly the same analysis
    and structure, we omit its discussion.

    \subsection*{Included Sums} 
    In the $i$th dimension reduction step, the \shortex algorithm uses the
    \shortinc technique along the $i$th dimension to reduce the elements ``in
    range'' along the $i$th dimension in the $i$-complement. That is, given a
    tensor $\tA$ of size $N = n_1 \cdot n_2 \cdots n_d$ and a number $i < d$ of
    dimensions reduced so far, we define a subroutine \incdim.

    \incdim computes the included sum for each row along a specified dimension
    after dimension reduction. Let \tA be a $d$-dimensional tensor, $\vk$ be a
    box size, $i$ be the number of reduced dimensions so far, and $j$ be the
    dimension to compute the included sum along such that $j > i$. \incdimfull
    computes the included sum along the $j$th dimension for all rows
    $\paren{\, n_1,\ldots, n_{i}, \rt,\ldots,\rt \,}$.  That
is, for each coordinate
$\vx = \smallparen{\, n_1,\ldots, n_{i}, x_{i+1},\ldots, x_{\dimension}\,}$, the
output tensor $\tA'$ contains the included sum along dimension~$j$:
\begin{multline*}
  \tA'\smallbrack{\vx} =
  \bigoplus \tA[\underbrace{n_1,\ldots, n_{i}}_{\text{$i$}},  \underbrace{x_{i+1},\ldots, x_{j}}_{\text{$j-i$}}, \\
  x_{j+1}\rt x_{j+1} + k_{j+1}, \underbrace{x_{j+2}, \ldots,
    x_\dimension}_{\text{$d - j-1$}}].
\end{multline*}

\incdimfull takes \incdimwork time because it iterates over
$\paren{\prod_{\ell={i+1}}^{\dimension} n_\ell}/n_{j+1}$ rows and runs in
$\Theta(n_{j+1})$ time per row. It takes $\Theta(N)$ space using the same
technique as \onedinc.

\subsection*{Adding in the Contribution}
Each dimension-reduction step must add its respective contribution to each
element in the output. Given an input tensor $\tA$ and output tensor $\tA'$,
both of size $N$, the function \addcontrib takes $\Theta(N)$ time to add in the
contribution with a pass through the tensors. The full pseudocode can be found
in~\figref{add-contribution-code} in~\appref{exsum-par}.

\subsection*{Putting It All Together}
Finally, we will see how to use the previously defined subroutines to describe
and analyze the box-complement algorithm for excluded sums.~\figref{exsum-code}
presents pseudocode for the \longex algorithm. Each dimension-reduction step has
a corresponding prefix and suffix step to add in the two components in the
recursive box-complement. Given an input tensor $\tA$ of size $N$, the \longex
algorithm takes $\Theta(N)$ space because all of its subroutines use at most a
constant number of temporaries of size $N$, as seen in~\figref{exsum-code}.

Given a tensor $\tA$ as input, the \longex algorithm solves the excluded-sums
problem by computing the recursive box-complement components from
~\corref{excl-components}. By construction, for dimension $i \in [1, d]$, the
prefix-sum part of the $i$th dimension-reduction step outputs a tensor $\tA_p$
such that for all coordinates $\vx = (x_1, \ldots, x_d)$, we have
\squeezeupsmall
\begin{eqnarray*}
\lefteqn{ \tA_p\smallbrack{x_1, \ldots, x_{\dimension}} =
    \bigoplus\tA[\underbrace{\rt, \ldots,
    \rt}_{\text{$i$}}, \rt x_{i+1},} \\
    & & \underbrace{x_{i+2}\rt x_{i+2} + k_{i+2}, \ldots,
    x_{\dimension}\rt x_{\dimension}+ k_d}_{\text{$d-i-1$}}].
\end{eqnarray*}
  Similarly, the suffix-sum step constructs a tensor $\tA_s$ such that for all $\vx$,
  \squeezeupsmall
  \begin{eqnarray*}
    \lefteqn{ \tA_s\smallbrack{x_1, \ldots, x_{\dimension}} = \bigoplus \tA[\underbrace{\rt, \ldots,
    \rt}_{\text{$i$}}, x_{i+1} + k_{i+1}\rt,}\\
  & & \quad   \underbrace{x_{i+2}\rt x_{i+2} + k_{i+2}, \ldots,
    x_{\dimension}\rt x_{\dimension}+ k_d}_{\text{$d-i-1$}}].
  \end{eqnarray*}

  We can now analyze the performance of the \longex algorithm.

\begin{theorem}[Time of Box-complement] 
  \label{thm:box-comp-work}
  Given a $d$-dimensional tensor $\tA$ of size
  $N = n_1 \cdot n_2 \cdot \ldots \cdot n_d$, \boxcomp solves the excluded-sums
  problem in \boxcompwork time.
\end{theorem}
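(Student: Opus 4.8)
The plan is to prove the statement in two parts: correctness (that \boxcomp actually produces the excluded sum at every coordinate) and the running-time bound, with the time analysis being the substantive part since correctness largely assembles results already in hand.

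For correctness I would lean entirely on~\corref{excl-components} together with the two ``by construction'' characterizations of $\tA_p$ and $\tA_s$ stated just before the theorem. The corollary writes the excluded region $C_d(B)$ of every box as the \emph{disjoint} union of $2d$ components, one ``prefix'' and one ``suffix'' component per dimension, while the characterizations of $\tA_p$ and $\tA_s$ say exactly that the prefix and suffix halves of the $i$th dimension-reduction step compute the reduction under $\oplus$ of the two components contributed at level $i$, simultaneously for all coordinates $\vx$. Since the $2d$ components are pairwise disjoint and together exhaust $C_d(B)$, accumulating their reductions into the output via \addcontrib (legitimate by associativity of $\oplus$) yields $\bigoplus_{\vy \in C_d(B)} \tA[\vy]$, the excluded sum, at each $\vx$. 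Thus \boxcomp solves the excluded-sums problem, and it remains only to bound its cost.

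For the running time I would charge the work step by step. Let $P_i = \prod_{\ell=i+1}^{d} n_\ell$ denote the size of the slice on which the step with $i$ reduced dimensions operates after fixing its first $i$ dimensions at $n_1, \ldots, n_i$, so that $P_0 = N$ and $P_{d-1} = n_d$. Using the per-subroutine bounds already established: \predim and \sufdim each cost $\Theta(P_i)$; the \shortinc technique invoked through \incdim along each of the at most $d-i$ remaining dimensions costs $\Theta(P_i)$ apiece, for $\Theta((d-i)P_i)$ total in the step; and each \addcontrib pass costs $\Theta(N)$. Summing the first two contributions over $i = 0, 1, \ldots, d-1$ gives total prefix/suffix-and-included-sum work $\sum_{i=0}^{d-1}\Theta((d-i)P_i) = O\!\left(d\sum_{i=0}^{d-1} P_i\right)$, while the $d$ invocations of \addcontrib contribute $\Theta(dN)$.

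The crux, and the step I expect to be the main obstacle to state cleanly, is that the slice sizes shrink geometrically: since $P_{i-1} = n_i P_i$ with each $n_i \ge 2$ (a dimension of length one is trivial and may be dropped), we have $P_i \le N/2^i$, so $\sum_{i=0}^{d-1} P_i = \Theta(N)$. This is precisely what collapses the $\sum_i (d-i) P_i$ term from the naive $\Theta(d^2 N)$ one would get by bounding every slice by $N$ down to $O(dN)$. Combining, the prefix/suffix and included-sum work totals $O(dN)$ and the \addcontrib passes total $\Theta(dN)$, so \boxcomp runs in $O(dN)$ time. The bound is tight because the algorithm performs $d$ dimension-reduction steps, each including an $\Theta(N)$-time \addcontrib pass over the full output tensor, forcing $\Omega(dN)$. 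Hence the running time is \boxcompwork, as claimed.
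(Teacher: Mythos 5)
Your time analysis is essentially the paper's own proof: the same per-step accounting (one prefix/suffix pass plus $d-i$ included-sum passes plus a $\Theta(N)$ contribution pass), and the same key observation that the slice sizes shrink geometrically ($\prod_{j>i} n_j \le N/2^{i}$ since nontrivial dimensions have length at least $2$), which collapses the sum to $\Theta(dN)$. The correctness argument you include via \corref{excl-components} is sound but goes beyond the paper's proof of this theorem, which addresses only the running time.
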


\begin{proof}
  We analyze the prefix step (since the suffix step is symmetric, it has the
  same running time). Let $i \in \{1, \ldots, d\}$ denote a dimension.

  The $i$th dimension reduction step in \boxcomp involves 1 prefix step and
  ($\dimension - i$) included sum calls, which each have
  $O\paren{\prod_{j=i}^{d} n_j}$ time. Furthermore, adding in the contribution
  at each dimension-reduction step takes
  $\Theta(N)$ time. The total time over $d$ steps is therefore \\
  $\Theta\paren{\sum\limits^{\dimension}_{i=1}
    \paren{(d-i+1)\prod\limits_{j=i}^{d} n_j + N}}.$ Adding in the contribution
  is clearly $\Theta(dN)$ in total.

  Next, we bound the runtime of the prefix and included sums. In each
  dimension-reduction step, reducing the number of dimensions of interest
  exponentially decreases the size of the considered tensor.  That is, dimension
  reduction exponentially reduces the size of the input:
  $\prod^{d}_{j = i} n_j \leq N/2^{i-1}$.
  The total time required to compute the box-complement components is therefore
  \squeezeupsmall
  \begin{align*}
    \sum\limits^{\dimension}_{i=1} (d-i+1)\prod\limits_{j=i}^{d} n_j & \leq
                                                                       \sum\limits^{\dimension
                                                                       }_{i=1}
                                                                       (d-i+1)
                                                                       \frac{N}{2^{i-1}} \\
    & \leq
      2(d +
      2^{-d}-
      1) N
      = \Theta(dN).
  \end{align*}
  Therefore, the total time of \boxcomp is $\Theta(dN)$.
\end{proof}


\begin{figure}[h]
\begin{codebox}
 \li \Commnt{\textbf{Input:} Tensor \tA with $d$-dimensions, box size $\vk$}
   \Commnt{\textbf{Output:} Tensor $\tA'$ with size and dimensions}
  \EndCommnt{matching \tA  containing the excluded sum.}
  \Procname{$\boxcomp(\tA, \vk)$}
  \li \texttt{init} $\tA'$ with the same size as $\tA$
  \li $\tA_p \gets \tA; \tA_s \gets \tA$
\li  \EndCommnt{Current
dimension-reduction step}
\li \For $\id{i} \gets 1$ \To $\id{d}$
\li \EndCommnt{Saved from previous dimension-reduction step.}
\li \Do $\tA_p \gets \tA$ reduced up to dimension $i-1$
\li $\tA_s \gets \tA_p$ \quad \EndCommnt{Save input to suffix step}
\li \Commnt{\textbf{PREFIX STEP}}
\quad\quad \EndCommnt{Reduced up to $i$ dimensions.}
\li $\predim$ along
\li \quad dimension $\id{i}$ on $\tA_p$.
\li $\tA \gets \tA_p$ \quad \EndCommnt{Save for next round}
\li \EndCommnt{Do included sum on dimensions $[i+1, d]$.}
\li \For $\id{j} \gets \id{i+1}$ \To $\id{d}$
\li \Do\EndCommnt{$\tA_p$ reduced up to $i$ dimensions}
\li  $\incdim$ on
\li \quad dimension $j$ in $\tA_p$
\End
\li \EndCommnt{Add into result}
\li $\addcontrib$ from $\tA_p$ into $\tA'$
\End
\li
\li \Do \Commnt{\textbf{SUFFIX STEP}}
\quad\quad \EndCommnt{Do suffix sum along dimension $i$}
\li $\sufdim$ along
\li \quad dimension $i$ in $\tA_s$
\li \EndCommnt{Do included sum on
dimensions $[i+1, d]$}
\li \For $\id{j} \gets \id{i+1}$ \To $\id{d}$
\li \Do \EndCommnt{$\tA_s$ reduced up
to $i$ dimensions}
\li  $\incdim$ on
\li \quad dimension $j$ in $\tA_s$
\End
\li \EndCommnt{Add into result}
\li $\addcontrib$ from $\tA_s$ into $\tA'$
\End
\li \Return $\tA'$
\End
\end{codebox}
\squeezeup
\caption{Pseudocode for the \longex algorithm. For ease of presentation, we omit
the exact parameters to the subroutines and describe their function in the
algorithm. The pseudocode with parameters can be found in~\figref{exsum-code-full}.}
\label{fig:exsum-code}
\end{figure}


\section{Experimental Evaluation}\label{sec:experiments}

This section presents an empirical evaluation of strong and weak excluded-sums
algorithms. In 3 dimensions, we compare strong excluded-sums algorithms:
specifically, we evaluate the box-complement algorithm and variants of the
corners algorithm and find that the box-complement outperforms the corners
algorithm given similar space. Furthermore, we compare weak excluded-sums
algorithms in higher dimensions. Lastly, to simulate a more expensive
operator than numeric addition when reducing, we compare the box-complement
algorithm and variants of the corners algorithm using an artificial slowdown.

\subsection*{Experimental Setup}

We implemented all algorithms in \texttt{C++}. We used the
Tapir/LLVM~\cite{SchardlMoLe19} branch of the LLVM\cite{Lattner02,
  LattnerAd04} compiler (version 9) with the \texttt{-O3} and
\texttt{-march=native} and \texttt{-flto} flags. 

All experiments were run on a 8-core 2-way hyper-threaded Intel
Xeon CPU E5-2666 v3 @ 2.90GHz with 30GB of memory from
AWS~\cite{amazonaws}.  
  For each test, we took the median of 3 trials.

To gather empirical data about space usage, we interposed \texttt{malloc} and
\texttt{free}. The theoretical space usage of the different algorithms can be
found in~\tabref{all-algs}.

\subsection*{Strong Excluded Sums in 3D}
\figref{intro-scatter} summarizes the results of our evaluation of the
box-complement and corners algorithm in 3 dimensions with a box length of
$k_1 = k_2 = k_3 = 4$ (for a total box volume of $K = 64$) and number of
elements $N = 681472$. We tested with varying $N$ but found that the time and
space per element were flat (full results in~\appref{eval}). We found that the
box-complement algorithm outperforms the corners algorithm by about $1.4\times$
when given similar amounts of space, though the corners algorithm with $2\times$
the space as the box-complement algorithm was $2\times$ faster.

We explored two different methods of using extra space in the corners algorithm
based on the computation tree of prefixes and suffixes: (1)~storing the spine of
the computation tree to asymptotically reduce the running time, and (2)~storing
the leaves of the computation tree to reduce passes through the output. Although
storing the leaves does not asymptotically affect the behavior of the corners
algorithm, we found that reducing the number of passes through the output has
significant effects on empirical performance. Storing the spine did not improve
performance, because the runtime is dominated by the number of passes through
the output.

\subsection*{Excluded Sums With Different Operators}

Most of our experiments used numeric addition for the $\oplus$ operator.  Because some applications, such as FMM, involve much more costly $\oplus$ operators, we studied how the excluded-sum algorithms scale with the cost of~$\oplus$.  To do so, we added a tunable slowdown to the invocation of $\oplus$ in the algorithms.  Specifically, they call an
unoptimized implementation of the standard recursive Fibonacci computation.  By varying the argument to the Fibonacci function, we can simulate $\oplus$ operators that take different amounts of time.

\begin{figure}[t]
  \begin{center}
    \includegraphics[width=\linewidth]{./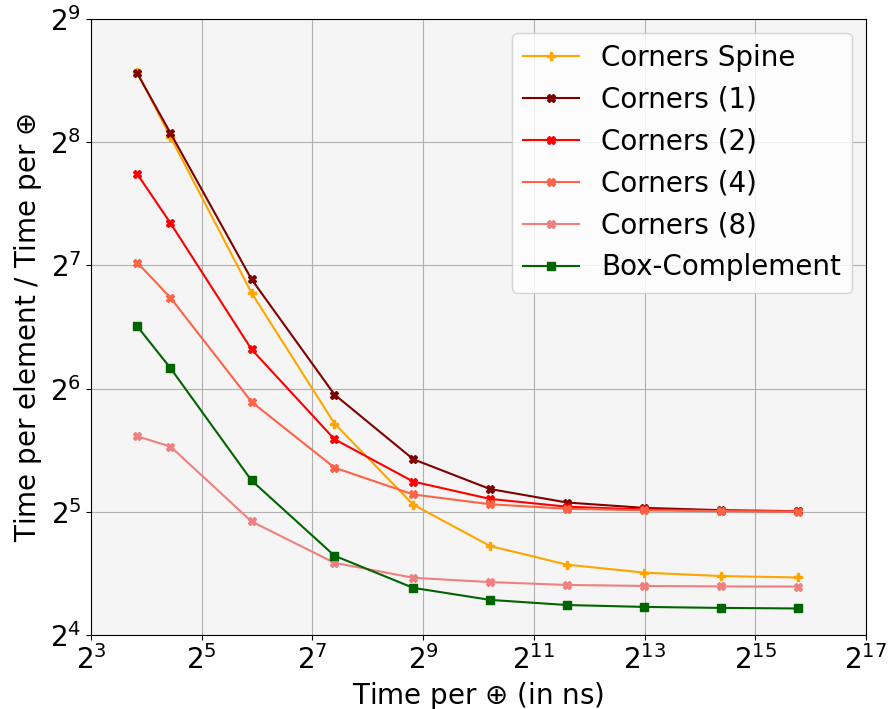}
    \end{center}
    \squeezeup
  \caption{The scalability of excluded-sum algorithms as a function of the cost of operator~$\oplus$ on a 3D domain of $N=4096$ elements.  The horizontal axis is the time in nanoseconds to execute~$\oplus$.
  The vertical axis represents the time per element of the given algorithm divided by the time for~$\oplus$.  We inflated the time of $\oplus$ using increasingly large arguments to the standard recursive implementation of a Fibonacci computation.}
  \label{fig:3d-slowdown}
\end{figure}

\figref{3d-slowdown} summarizes our findings.  We ran the algorithms on a 3D domain of $N=4096$ elements.  (Although this domain may seem small, \appref{eval} shows that the results are relatively insensitive to domain size.)  For inexpensive $\oplus$ operators, the box-complement algorithm is the second fastest, but as the cost of $\oplus$ increases, the box-complement algorithm dominates. The reason for this outcome is that box-complement performs approximately $12$ $\oplus$ operations per element in 3D, whereas the most efficient corners algorithm performs about $22$ $\oplus$ operations.  As $\oplus$ becomes more costly, the time spent executing $\oplus$ dominates the other bookkeeping overhead.

\subsection*{Weak Excluded Sums in Higher Dimensions}
\figref{intro-nd} presents the results of our evaluation of weak excluded-sum
algorithms in higher dimensions. For all dimensions $i = 1, 2, \ldots, d$, we
set the box length $k_i = 8$ and chose a number of elements $N$ to be a perfect power of dimension $i$.~\tabref{all-algs} presents the asymptotic runtime of
the different excluded-sum algorithms.

The weak naive algorithm for excluded sums with nested loops outperforms all of
the other algorithms up to 2 dimensions because its runtime is dependent on the
box volume, which is low in smaller dimensions. Since its runtime grows
exponentially with the box length, however, we limited it to 5 dimensions.

The summed-area table algorithm outperforms the BDBS and
box-complement algorithms up to 6 dimensions, but its runtime scales
exponentially in the number of dimensions.

Finally, the BDBS and box-complement algorithms scale linearly in the number of
dimensions and outperform both naive and summed-area table methods in higher
dimensions. Specifically, the box-complement algorithm outperforms the
summed-area table algorithm by between $1.3\times$ and $4\times$ after 6 dimensions. The
BDBS algorithm demonstrates an advantage to having an inverse: it outperforms
the box-complement algorithm by $1.1\times$ to $4\times$. Therefore, the BDBS algorithm
dominates the box-complement algorithm for weak excluded sums.


\section{Conclusion}\label{sec:conclusion}

In this paper, we introduced the \longex algorithm for the excluded-sums
problem, which improves the running time of the state-of-the-art \demalg from
$\Omega(2^dN)$ to $\Theta(dN)$ time. The space usage of the
\longex \alg is independent of the number of dimensions, while the \demalg may
use space dependent on the number of dimensions to achieve its running-time lower bound.

The three new algorithms from this paper parallelize straightforwardly.  In the
work/span model \cite{CLRS}, all three algorithms are work-efficient, achieving
$\Theta(d N)$ work. The BDBS and BDBSC algorithms achieve $\Theta(d\log N)$
span, and the box-complement algorithm achieves \boxcompspan span.

\section*{Acknowledgments}
The idea behind the box-complement algorithm was originally conceived jointly
with Erik Demaine many years ago, and we thank him for helpful discussions.
Research was sponsored by the United States Air Force Research Laboratory and
the United States Air Force Artificial Intelligence Accelerator and was
accomplished under Cooperative Agreement Number FA8750-19-2-1000. The views and
conclusions contained in this document are those of the authors and should not
be interpreted as representing the official policies, either expressed or
implied, of the United States Air Force or the U.S. Government. The
U.S. Government is authorized to reproduce and distribute reprints for
Government purposes notwithstanding any copyright notation herein.

\clearpage
 \bibliographystyle{plain}
 \bibliography{main}

 \clearpage
 \appendix
 \clearpage
\section{Analysis of Corners Algorithm}\label{app:corners}

This section presents an analysis of the time and space usage of the 
\defn{\demalg}~\cite{DemaDeEd05} for the excluded-sums problem. The original
article that proposed the \demalg did not include an analysis of its
performance. As we will see, the runtime 
of the \demalg is a function of the space
it is allowed. 

\paragraph{Algorithm Description.} Given a $d$-dimensional tensor $\tA$
of size $N$ and a box $B$, the \demalg partitions the excluded region $C_d(B)$
into $2^\dimension$ disjoint regions corresponding to the \corners of the
box. Each excluded sum is the sum of the reductions of each of the corresponding
$2^\dimension$ regions.  The \demalg computes the reduction of each partition
with a combination of prefix and suffix sums over the entire tensor and saves
work by reusing prefixes and suffixes in overlapping
regions.~\figref{2d-corners} illustrates an example of the \demalg.

\begin{figure}[htb]
  \begin{center}
    \includegraphics[width=5cm]{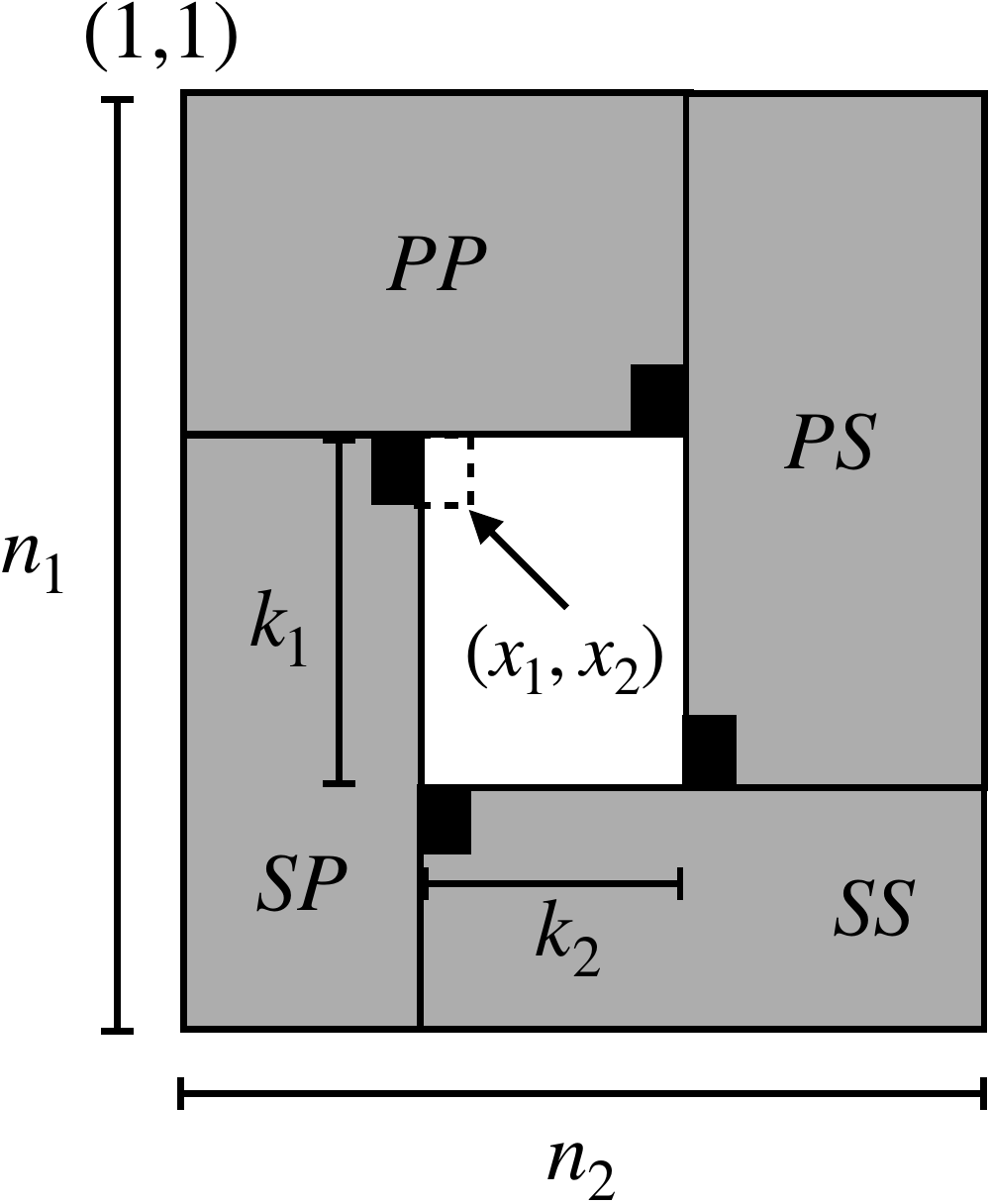}
    \end{center}
    \caption{An example of the \demalg in 2 dimensions on an $n_1\times n_2$
      matrix using a $(k_1,k_2)$-box cornered at $(x_1, x_2)$. The grey regions
      represent excluded regions computed via prefix and suffix sums, and the
      black boxes correspond to the corner of each region with the relevant
      contribution. The labels $PP, PS, SP, SS$ represent the combination of
      prefixes and suffixes corresponding to each vertex.}
  \label{fig:2d-corners}
\end{figure}

We can represent each length-$\dimension$ combination of prefixes and suffixes
as a length-$\dimension$ binary string where a $0$ or $1$ in the $i$-th position
corresponds to a prefix or suffix (resp.) at depth $i$. As illustrated
in~\figref{computation-tree}, the \demalg defines a computation tree
where each node represents a combination of prefixes and suffixes, and each edge
from depth $i-1$ to $i$ represents a full prefix or suffix along dimension
$i$. The total height of this computation tree is $d$, so there are $2^d$
leaves.

\begin{figure}[h]
  \begin{center}
    \includegraphics[width=8cm]{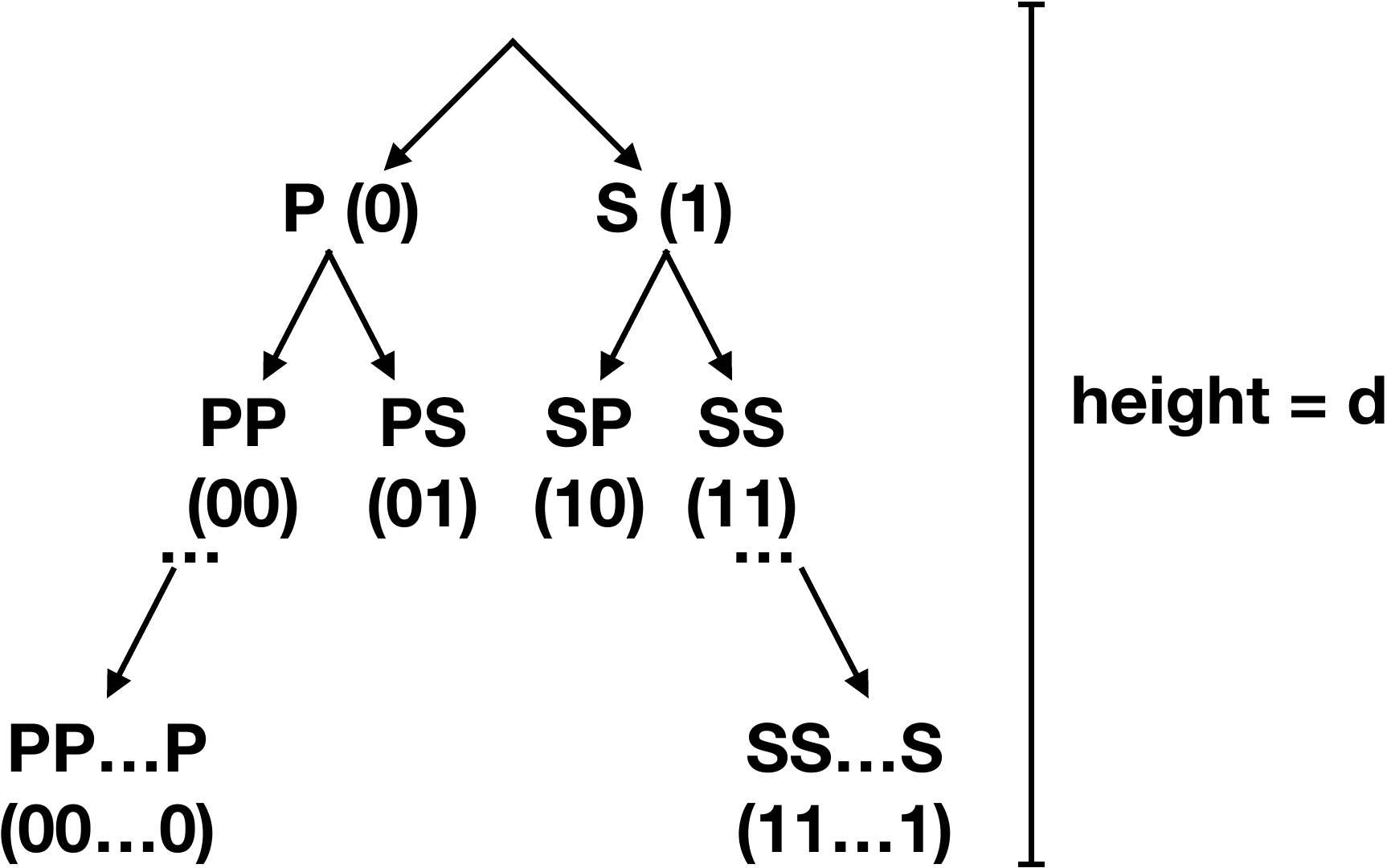}
  \end{center}
  \caption{The dependency tree of computations in the \demalg. P and S represent
  full-tensor prefix and suffix sums, respectively. Each leaf is a string of
  length $\dimension$ that denotes a series of prefix and suffix sums along the
  entire tensor.}
 \label{fig:computation-tree}
\end{figure}

\paragraph{Analysis.} The most naive implementation of the \demalg that
computes every root-to-leaf path without reusing computation between paths takes
$\Theta(N)$ space, but $\Theta(dN)$ time per leaf, for total time
$\Theta(d2^dN)$. We will see how to use extra space to reuse computation
between paths and reduce the total time.

\begin{theorem}[Time / Space Tradeoff]
  \label{thm:time-space-corners}
  Given a multiplicative space allowance $c$ such that $1 \leq c \leq d$, the
  \demalg solves the excluded-sums problem in $\Theta((2^{c+1} + 2^d(d-c) + 2^d)N)$
  time if it is allowed $\Theta(cN)$ space.
\end{theorem}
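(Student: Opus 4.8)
The plan is to analyze the running time of the \demalg directly in terms of its computation tree (\figref{computation-tree}): a complete binary tree of height $d$ with $2^d$ leaves, in which the edge descending from depth $i-1$ to depth $i$ is a full-tensor prefix or suffix sum along dimension $i$ (costing $\Theta(N)$), and each leaf must additionally spend $\Theta(N)$ time adding its corner's contribution into the output. The only freedom the algorithm has is how much of this tree it memoizes; the allowance $c$ governs this, so the whole proof reduces to counting how many times each edge is (re)computed under a memoization strategy that fits in $\Theta(cN)$ space.

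First I would fix the evaluation strategy: perform a depth-first traversal that visits the leaves left to right, and at every moment store the tensors along the current root-to-node path, but only down to depth $c$ (one tensor per level, so $c$ stored tensors), recomputing everything strictly below depth $c$ on demand using a constant number of scratch tensors. I would then confirm the space bound is $\Theta(cN)$: the stored spine uses $c$ tensors of size $N$, and descending through the unmemoized bottom $d-c$ levels needs only $O(1)$ scratch tensors, for a total of $\Theta(cN)$.

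Next I would charge the time in three pieces. (i) For the top $c$ levels, storing the full current path means that whenever the DFS backtracks to a memoized ancestor (depth $<c$) and descends into its other child, that ancestor's tensor is already available, so each edge at depth $i\le c$ is computed exactly once; summing over the $\sum_{i=1}^{c}2^i=2^{c+1}-2$ such edges gives $\Theta(2^{c+1}N)$. (ii) Below depth $c$ nothing is memoized, so each of the $2^d$ leaves re-derives the length-$(d-c)$ path from its depth-$c$ ancestor, contributing $d-c$ edge computations per leaf for a total of $\Theta((d-c)2^dN)$. (iii) The contributions at the $2^d$ leaves add $\Theta(2^dN)$. Summing yields $\Theta\paren{(2^{c+1}+(d-c)2^d+2^d)N}$; since each counted operation genuinely occurs (every leaf contribution is added, every bottom edge is recomputed once per leaf, and every top edge is computed at least once), the bound is tight, giving the claimed $\Theta$.

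The main obstacle I anticipate is the accounting that keeps the space at $\Theta(cN)$ rather than $\Theta(2^cN)$: the tempting ``store the entire frontier at depth $c$'' strategy would memoize all $2^c$ nodes at that level and blow the budget, so I must argue that storing just the current path — one tensor per level — already suffices to compute every top-$c$ edge only once. The crux is the DFS invariant: because a node's two child-edges are the only way the traversal ever leaves that node, and the node remains on the stored path until the traversal backtracks above it for good, each such edge is taken exactly once even though only a single path, not the whole upper subtree, is ever held in memory. I would close with the boundary sanity checks — $c=1$ recovers $\Theta(d2^dN)$ and $c=d$ recovers $\Theta(2^dN)$ — to confirm the formula interpolates correctly between the space-frugal and space-rich regimes.
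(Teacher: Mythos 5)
Your proposal is correct and follows essentially the same route as the paper's proof: a depth-first traversal of the computation tree that memoizes the tensors on the current root-to-leaf path down to depth $c$ (giving $\Theta(cN)$ space), charges $(d-c)$ recomputed edges per leaf below that depth, and adds $\Theta(N)$ per leaf for the contribution. The only cosmetic difference is in the shared-work accounting: you count the $2^{c+1}-2$ edges of the top $c$ levels directly via the DFS invariant, whereas the paper counts the bit positions that change between consecutive depth-$c$ nodes and sums $N\sum_{b} b\,2^{c-b}$ --- both yield $\Theta(2^{c+1}N)$ and hence the same total.
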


\begin{proof}
  The \demalg must traverse the entire computation tree in order to
  compute all of the leaves. If it follows a depth-first traversal of the tree,
  one possible use of the extra $\Theta(cN)$ allowed space is to keep the
  intermediate combination of prefix and suffices at the first $c$ internal
  nodes along the current root-to-leaf path in the traversal. We will analyze
  this scheme in terms of 1) the amount of time that each leaf requires
  independently, and 2) the total shared work between leaves.
  The total time of the algorithm is the sum of these two components. \\
  \textbf{Independent work:} For each leaf, if the first $c$ prefixes and
  suffixes have been computed along its root-to-leaf path, there are an
  additional $(d-c)$ prefix and suffix computations required to compute that
  leaf. Therefore, each leaf takes $\Theta((d-c)N)$ additional time outside of
  the shared
  computation, for a total of $\Theta(2^d(d-c)N)$ time. \\
  \textbf{Shared work:} The remaining time of the algorithm is the amount of
  time it takes to compute the higher levels of the tree up to depth $c$ given a
  $c$ factor in space. Given a node $v$ at depth $c$ with position $i$ such that
  $1 \leq i < c$, the amount of time it takes to compute the intermediate sums
  along the root-to-leaf path to $v$ depends on the difference in the bit
  representation between $i$ and $i-1$. Specifically, if $i$ and $i-1$ differ in
  $b$ bits, it takes $bN$ additional time to store the intermediate sums for
  node $i$ at depth $c$. In general, the number of nodes that differ in
  $b \in \{1, 2, \ldots, c\}$ positions at depth $c$ is $2^{c-b}$. Therefore, the
  total time of computing the intermediate sums is
  $$N \sum^c_{b=1} b2^{c-b} \approx 2^{c+1} N = \Theta(2^c N).$$ \textbf{Putting it
    together:} Each leaf also requires $\Theta(N)$ time to add in the
  contribution. Therefore, the total time is
  $\Theta\smallparen{\underbrace{2^c N}_{\text{shared}} +
  \underbrace{2^d(d-c)N}_{\text{independent}} + \underbrace{2^d
    N}_{\text{contribution}}}.$
\end{proof}

The time of the \demalg is lower bounded by $\Omega(2^d N)$ and
minimized when $c = \Theta(d)$.  Given $\Theta(N)$ space, the \demalg solves the
excluded-sums problem in $O(2^\dimension \dimension N)$ time. Given $\Theta(dN)$
space, the \demalg solves the excluded-sums problem in $O(2^d N)$ time.

 \clearpage

\clearpage

\section{Included Sums Appendix}\label{app:incsum-par}


\begin{figure}[h]
\begin{codebox}
  \li \Commnt{\textbf{Input:} List $A$ of size $N$ and}
   \Commnt{ included-sum length $k$.}
  \Commnt{\textbf{Output:} List $A'$ of size $N$ where each }
  \EndCommnt{entry $A'[i] = A[i\rt i+k]$
  for $i = 1, 2, \ldots N$.}
  \Procname{$\onedinc(A, N, k)$}
 \li allocate $\id{A'}$ with $N$ slots
 \li $A_p \gets A; A_s \gets A$
 \li \For $\id{i} \gets 1$ \To $\id{N/k}$ 
 \li \Do \EndCommnt{$k$-wise prefix sum
 along $A_p$} 
 \li \prefix($A_p$, $(i-1)k + 1$, $ik$)
  \li \EndCommnt{$k$-wise suffix sum
along $A_s$ } \label{li:1d-incsum-inner-suf}
 \li \suffix($A_s$,  $(i-1)k + 1$, $ik$)
\End
\li \For $\id{i} \gets 1$ \To $\id{N}$ \label{li:1d-incsum-result} \quad
\EndCommnt{Combine into result }
\li \Do \If $\id{i} \bmod k = 0$
\li \Do $A'[\id{i}] \gets A_s[\id{i}]$
\li \Else
\li $A'[\id{i}] \gets A_s[\id{i}] \oplus A_p[\id{i} + \id{k} - 1]$
\End
\End
\li \Return $A'$
\end{codebox}
\caption{Pseudocode for the 1D included sum.}
\label{fig:incsum-1d-code}
\end{figure}

\begin{lemma}[Correctness in 1D]
    \label{lem:incsum-1d}
    \onedinc solves the included sums problem in 1 dimension.
\end{lemma}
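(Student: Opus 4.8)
The plan is to prove correctness in two stages, mirroring the two passes of \onedinc: first characterize the contents of the intermediate arrays $A_p$ and $A_s$ after the ranged prefix/suffix pass, and then verify that the combine pass assembles these pieces into the included sum $A[i\rt i+k]$ for every coordinate $i$. Throughout, the only algebraic facts I need are the correctness of \prefix and \suffix (stated in the preliminaries) and the associativity of $\oplus$.

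First I would establish the closed forms asserted in the text, namely $A_p[x] = \bigoplus A[k\lfloor x/k\rfloor \rt x+1]$ and $A_s[x] = \bigoplus A[x \rt k\lceil (x+1)/k\rceil]$ for every $x \in \{1,\ldots,N\}$. These follow immediately from the behavior of \prefix and \suffix: the first loop ranges over the $N/k$ disjoint blocks of length $k$ and applies \prefix (resp.\ \suffix) within each block, so $A_p[x]$ (resp.\ $A_s[x]$) holds exactly the running reduction from the start of $x$'s block up to $x$ (resp.\ from $x$ to the end of its block). It is cleanest to phrase this as a per-block invariant and invoke associativity so that the block-local running sums coincide with the reductions of the stated half-open ranges.

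The heart of the proof is the combine pass. The key geometric observation is that the box $[i,i+k)$ has length exactly $k$, equal to the block size, so it meets at most two consecutive blocks, and the intervening block boundary splits it into a suffix of the first block (captured by $A_s[i]$) followed by a prefix of the second block (captured by $A_p[i+k-1]$). Concretely, I would write $i = qk + r$ with $0 \le r \le k-1$ and show that the upper endpoint of the range defining $A_s[i]$ is adjacent to the lower endpoint of the range defining $A_p[i+k-1]$ precisely when $r \neq 0$, i.e.\ that $k\lceil (i+1)/k\rceil = k\lfloor (i+k-1)/k\rfloor$ holds for $1 \le r \le k-1$ and fails for $r = 0$. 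In the case $1 \le r \le k-1$ the two ranges partition $[i,i+k)$ exactly, so associativity yields $A'[i] = A_s[i] \oplus A_p[i+k-1] = \bigoplus A[i\rt i+k]$, matching the \textbf{Else} branch. In the case $r = 0$ (the test $i \bmod k = 0$) the box is block-aligned, the range defining $A_s[i]$ already equals $[i,i+k)$, and the prefix contribution would be vacuous, so $A'[i] = A_s[i] = \bigoplus A[i\rt i+k]$, matching the \textbf{If} branch. Coordinates that fall outside $[1,N]$ are handled by the convention that out-of-range entries reduce to the identity $e$ (equivalently, by the padding assumption $N \bmod k = 0$), so clipping a block against the array bounds introduces no error.

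I expect the main obstacle to be exactly this alignment bookkeeping: pinning down, with no off-by-one error, that the suffix and prefix pieces meet at the block boundary with neither gap nor overlap, and confirming that $i \bmod k = 0$ is the unique situation in which the single term $A_s[i]$ already covers the whole box so that the combine step must \emph{not} add a prefix contribution. Once the index arithmetic of this decomposition is checked against the two branches of the combine loop, the rest of the argument is routine appeals to correctness of \prefix and \suffix and to associativity of~$\oplus$.
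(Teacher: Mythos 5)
Your proposal is correct and follows essentially the same route as the paper's own proof: first establish the closed forms for $A_p$ and $A_s$, then use the boundary identity $k\lceil(x+1)/k\rceil = k\lfloor(x+k-1)/k\rfloor$ to show that the suffix-of-block range and the prefix-of-next-block range abut with neither gap nor overlap and hence partition $[x,x+k)$, the block-aligned case being covered by $A_s$ alone. The only divergence is that you place the aligned case at $i \bmod k = 0$ — which is consistent with both the pseudocode's branch test and the stated closed forms — whereas the paper's proof text asserts it for $x \bmod k = 1$ and claims the floor/ceiling identity for all $x \bmod k \neq 1$; your residue bookkeeping avoids that off-by-one slip.
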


\begin{proof}
  Consider a list $A$ with $N$ elements and box length $k$. We will show that
  for each $x = 1, 2, \ldots, N$, the output $A'[x]$ contains the desired
  sum. For $x \mod k = 1$, this holds by construction. For all other $x$, the
  previously defined prefix and suffix sum give the desired result. Recall that
  $A'\brack{x} = A_p\brack{x + k - 1} + A_s\brack{x}, A_s\brack{x} =
  A\brack{x:\ceil{(x+1)/k} \cdot k}$, and
  $A_p\brack{x + k - 1} = A\brack{\floor{(x + k - 1)/k} \cdot k: x+ k}.$ Also
  note that for all $x \mod k \neq 1$, $\floor{(x + k - 1)/k} = \ceil{(x+1)/k}$.

  Therefore,
  \begin{align*}
    A'[x] & = A_p\brack{x + k - 1} + A_s\brack{x} \\
             & = A \brack{x:\ceil{\frac{x+1}{k}} \cdot k} +
                A \brack{\floor{ \frac{x + k - 1}{k} } \cdot k:x+ k} \\
    & =A \brack{x:x + k}
  \end{align*}
  which is exactly the desired sum.
\end{proof}

\begin{lemma}[Time and Space in 1D]
  \label{lem:incsum-1d-work-space}
  Given an input array $A$ of size $N$ and box length $k$, \onedinc takes
  $\Theta(N)$ time and $\Theta(N)$ space.
\end{lemma}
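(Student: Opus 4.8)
The plan is to account separately for the space and the time, then combine. For space, I would observe that \onedinc allocates exactly three arrays of length $N$: the output $A'$, and the two intermediate arrays $A_p$ and $A_s$ (the latter two initialized as copies of~$A$). Nothing else the procedure stores grows with~$N$, so the total space is $\Theta(N)$.

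For time, I would partition the procedure into three phases and bound each. First, allocating $A'$ and copying $A$ into $A_p$ and $A_s$ each touch $N$ elements, for $\Theta(N)$ initialization time. Second -- and this is the key step -- I would analyze the loop performing the $k$-wise prefix and suffix sums. The loop runs $N/k$ iterations, and in iteration~$i$ it calls \prefix and \suffix on the range $[(i-1)k+1,\, ik]$, which spans $k$ consecutive indices (so $\mathit{end}-\mathit{start} = k-1$). By the stated cost of \prefix and \suffix, each such call runs in $\Theta(k)$ time, so each iteration costs $\Theta(k)$ and the entire loop costs $\Theta((N/k)\cdot k) = \Theta(N)$. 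The crucial observation is that the $N/k$ disjoint length-$k$ ranges exactly partition the array, so the per-call costs collapse to a total of $\Theta(N)$ rather than depending on~$k$ in a way that could blow up. Third, the combine loop runs $N$ iterations, each performing a constant number of array accesses and at most one application of~$\oplus$, hence $\Theta(1)$ per iteration and $\Theta(N)$ overall.

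Summing the three phases gives $\Theta(N)+\Theta(N)+\Theta(N)=\Theta(N)$ total time, and since each phase must touch all $N$ input or output elements, the matching $\Omega(N)$ lower bound is immediate, making the bound tight. There is no genuine obstacle here; the only points requiring care are verifying that the per-iteration prefix/suffix cost, summed over the $N/k$ iterations, telescopes to $\Theta(N)$ because the ranges are disjoint and cover the array, and confirming that each application of~$\oplus$ is counted as unit cost, consistent with the monoid model used throughout the paper.
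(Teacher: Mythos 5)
Your proposal is correct and follows essentially the same route as the paper's own proof: the $N/k$ disjoint length-$k$ prefix/suffix ranges sum to $\Theta(N)$ total time, the combine loop contributes $N$ iterations of $O(1)$ each, and the constant number of length-$N$ temporaries gives $\Theta(N)$ space. Your version simply spells out the telescoping of the per-range costs more explicitly than the paper does.
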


\begin{proof}
  The total time of the prefix and suffix sums is $O(N)$, and the loop that
  aggregates the result into $A'$ has $N$ iterations of $O(1)$ time
  each. Therefore, the total time of \onedinc is $\Theta(N)$. Furthermore,
  \onedinc uses two temporary arrays of size $N$ each for the prefix and suffix,
  for total space $\Theta(N)$.
\end{proof}

 \clearpage
 \section{Excluded Sums Appendix}\label{app:exsum-par}

\begin{figure}[h]
\begin{codebox}
  \li \Commnt{\textbf{Input:} Tensor \tA ($\dimension$ dimensions, side lengths}
  \Commnt{($n_1, \ldots,
  n_{\dimension}$), dimension $i$ to do the prefix sum}
  \Commnt{along.}
  \Commnt{\textbf{Output:} Modify \tA to do the prefix sum along}
  \Commnt{dimension $i+1$, fixing  dimensions up to $i$.}
  \Procname{$\predim(\tA, \id{i})$}
  \li \Commnt{Iterate through coordinates by varying}
   \Commnt{coordinates in dimensions
  $i+2, \ldots, d$}
  \Commnt{while fixing the first $i$ dimensions.}
  \Commnt{Blanks mean they are not iterated over}
  \EndCommnt{in the outer loop}
  \li \For
  \li \quad $\{\vx = (x_1, \ldots, x_\dimension) \in ( \underbrace{n_1 ,\ldots, n_{i}}_{\text{$i$ }},
 \_,
\underbrace{:, \ldots, :}_{\text{$d - i - 1$}})\}$
\li \Do \Commnt{Prefix sum along row}
\quad\quad \EndCommnt{(can be replaced with a parallel prefix)}
  \li  \For $\id{\ell} \gets 2$ \To $\id{n_{i+1}}$
\li \Do $\tA[\underbrace{n_1,\ldots, n_{i}}_{\text{$i$}}, \ell,
\underbrace{x_{i+2},\ldots, x_{d}}_{\text{$d-i-1$}}] \pluseq$
\li \quad $ \tA[\underbrace{n_1,\ldots, n_{i}}_{\text{$i$}}, \ell-1,
\underbrace{x_{i+2},\ldots, x_{d}}_{\text{$d-i-1$}}]$
\End
\End
\end{codebox}
\caption{Prefix sum along all rows along a dimension with initial dimensions fixed.}
\label{fig:full-presum}
\end{figure}

The suffix sum along a dimension is almost exactly the same, so we omit it.

\begin{restatable}[Time of Prefix Sum]{lemma}{presumwork}
  \label{lem:presum-work}
  \predimfull takes $O\paren{\prod_{j=i+1}^{d} n_j}$ time.
\end{restatable}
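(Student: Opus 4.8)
The plan is to bound the running time by directly counting the work done in the two nested loops of $\predim$ shown in Figure~\figref{full-presum}. The key observation is that the outer loop pins the first $i$ coordinates to the fixed values $(n_1, \ldots, n_i)$ and delegates dimension $i+1$ to the inner loop, so the only coordinates it actually ranges over are those in dimensions $i+2, \ldots, d$. First I would count these outer iterations: a fixed coordinate contributes a factor of $1$, while each free dimension $j \in \{i+2, \ldots, d\}$ contributes $n_j$, so there are exactly $\prod_{j=i+2}^{d} n_j$ iterations of the outer loop.

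Next I would analyze the cost of a single outer iteration. For each fixed choice of the trailing coordinates, the inner loop performs a serial prefix sum along dimension $i+1$, running $\ell$ from $2$ to $n_{i+1}$ and applying exactly one $\oplus$ operation per step. Treating $\oplus$ as an $O(1)$-time operation, as is done throughout the paper, this row-prefix costs $\Theta(n_{i+1})$ time.

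Finally I would multiply the two factors to obtain the total running time
\[
\Theta(n_{i+1}) \cdot \prod_{j=i+2}^{d} n_j = \Theta\paren{\prod_{j=i+1}^{d} n_j},
\]
which gives the claimed bound (in fact a tight $\Theta$, and a fortiori the stated $O$). I expect no genuine obstacle here: the argument is a routine loop-counting exercise. The only point requiring care is the index bookkeeping --- correctly attributing dimension $i+1$ to the inner prefix and dimensions $i+2, \ldots, d$ to the outer loop, and remembering that the $i$ pinned leading dimensions each contribute a factor of $1$ rather than $n_j$ to the iteration count --- so that the inner-loop factor $n_{i+1}$ combines with the outer-loop product $\prod_{j=i+2}^{d} n_j$ to give exactly $\prod_{j=i+1}^{d} n_j$.
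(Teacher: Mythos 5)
Your proposal is correct and follows essentially the same argument as the paper's own proof: count the $\prod_{j=i+2}^{d} n_j$ outer-loop iterations over the free trailing dimensions and multiply by the $\Theta(n_{i+1})$ cost of the serial row prefix. The only cosmetic difference is that the paper writes the iteration count as $\max(1, \prod_{j=i+2}^{d} n_j)$ to make the empty-product edge case explicit, which your empty-product convention handles implicitly.
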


\begin{proof}
  The outer loop over dimensions $i+2, \ldots, \dimension$ has
  $\max\paren{1, \prod_{j=i+2}^{d} n_j}$ iterations,
  each with $\Theta\paren{n_{i+1}}$ work for the inner prefix sum.  Therefore,
  the total time is $O\paren{\prod_{j=i+1}^{d} n_j}$.
\end{proof}

\begin{figure}[h]
\begin{codebox}
  \li \Commnt{\textbf{Input:} Input tensor \tA, output tensor \tB,}
  \Commnt{fixing dimensions up to $i$.}
  \Commnt{\textbf{Output:} For all coords in \tB, add the}
   \EndCommnt{relevant contribution from \tA.}
  \Procname{$\addcontrib(\tA, \tB, \id{i}, \id{offset})$}
  \li \For
  $\{(x_1, \ldots, x_{\dimension}) \in (:, \ldots, :)\}$
  \li \Do \If $x_{i+1} + \id{offset} \leq n_{i+1}$
  \li \Do
  $\tB[x_1, \ldots, x_{d}] =$
  \li \quad $\tA[\underbrace{n_1, \ldots, n_{i}}_{\text{$i$}},
  \underbrace{x_{i+1} + \id{offset}, x_{i+2}, \ldots, x_{d}}_{\text{$d-i$}}]$
\End
\end{codebox}
\caption{Adding in the contribution.}
\label{fig:add-contribution-code}
\end{figure}

\begin{lemma}[Adding Contribution]
  \addcontrib takes \addcontribwork time.
\end{lemma}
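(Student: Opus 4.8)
The plan is a direct operation count on the pseudocode of \addcontrib. The procedure consists of a single loop ranging over the entire coordinate set $(:,\ldots,:)$ of the tensor, so the first step is simply to observe that this loop executes exactly once per coordinate, giving $N = n_1 n_2 \cdots n_\dimension$ iterations in total, independent of the parameters $i$ and the offset. This counting is the whole substance of the argument; the indexing shifts in the loop body (replacing $x_{i+1}$ by $x_{i+1} + \text{offset}$ and pinning the first $i$ dimensions to $n_1, \ldots, n_i$) change \emph{which} element of \tA is read but not \emph{how many} iterations occur.

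Next I would argue that each iteration costs $\Theta(1)$. The body performs one comparison (testing whether $x_{i+1} + \text{offset} \leq n_{i+1}$) and, when the test succeeds, one indexed read from \tA, one indexed write into \tB, and at most one application of $\oplus$ to combine the contribution. Under the cost model used throughout the paper --- where indexing a tensor and evaluating $\oplus$ each take constant time --- the work of a single iteration is bounded both above and below by constants.

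Finally, I would multiply the $N$ iterations by the $\Theta(1)$ per-iteration cost to obtain the upper bound $O(N)$, and note the matching lower bound $\Omega(N)$, which holds because the loop makes all $N$ passes regardless of how many coordinates satisfy the conditional. Combining the two bounds yields the claimed \addcontribwork running time. There is no genuine obstacle here; the only point meriting a sentence is confirming that the loop body is truly constant-time --- in particular that evaluating $\oplus$ is $O(1)$ --- which follows from the same assumptions on tensor indexing and the operator that the paper invokes for its other subroutines, such as \onedinc.
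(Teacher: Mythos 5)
Your argument is correct and matches the paper's treatment: the paper gives no formal proof of this lemma, justifying it only with the observation that \addcontrib makes a single pass over the $N$-element tensors with constant work per coordinate, which is exactly your operation count. Nothing further is needed.
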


\begin{figure}[h]
\begin{codebox}
  \li \Commnt{\textbf{Input:} Tensor \tA of $d$ dimensions and side lengths}
 \Commnt{$(n_1, \ldots, n_{d})$ output tensor $\Ten{B}$, side lengths of
 the}
 \Commnt{excluded box $\vk = (k_1,
  \ldots, k_{d})$, $k_i \leq
  n_i$ for all}
  \Commnt{$i = 1, 2, \ldots, d$.}
  \Commnt{\textbf{Output:} Tensor \tB with size and dimensions}
  \EndCommnt{matching \tA
  containing the excluded sum.}
\li $\tA' \gets \tA, \tA_p \gets \tA, \tA_s \gets \tA$ \quad \EndCommnt{Prefix and suffix temp}
\li \For $\id{i} \gets 1$ \To $\id{d}$ \quad \EndCommnt{Current dimension-reduction step}
\li \Do \Commnt{\textbf{PREFIX STEP}}
\quad\quad \Commnt{At this point, $\tA_p$ should hold prefixes up to}
\quad\quad \EndCommnt{dimension $i - 1$.}
\li $\tA' \gets \tA_p$
\li \EndCommnt{Save the input to the suffix step}
\li $\tA_s \gets \tA_p$
\li \EndCommnt{Do prefix sum along dimension $i$}
\li $\predim(\tA', \id{i} - 1)$
\li \EndCommnt{Save prefix up to dimension $i$ in $\tA_p$}
\li $\tA_p \gets \tA'$
\li \EndCommnt{Do included sum on dimensions $[i+1, d]$}
\li \For $\id{j} \gets \id{i+1}$ \To $\id{d}$
\li \Do  $\incdim(\tA', \id{i} - 1, \id{j}, \vk)$
\End
\li \EndCommnt{Add into result}
\li $\addcontrib(\tA', \tB, \id{i}, - 1)$
\End

\li \Do \Commnt{\textbf{SUFFIX STEP}}
\quad\quad \Commnt{Start with the prefix up until dimension}
\quad\quad \EndCommnt{$i - 1$}
\li $\tA' \gets \tA_s$
\li \EndCommnt{Do suffix sum along dimension $i$}
\li $\sufdim(\tA', \id{i}-1)$
\li  \EndCommnt{Do included sum on dimensions $[i+1, d]$}
\li \For $\id{j} \gets \id{i+1}$ \To $\id{d}$
\li \Do $\incdim(\tA', \id{i} - 1, \id{j}, \vk)$
\End
\li \EndCommnt{Add into result}
\li $\addcontrib(\tA', \tB, \id{i}-1, k_i)$
\End
\End
\end{codebox}
\caption{Pseudocode for the \longex algorithm with parameters filled in. For the
  $i$th dimension-reduction step, the copy of temporaries only needs to copy
  the last $d-i+1$ dimensions due to the dimension reduction.}
\label{fig:exsum-code-full}
\end{figure}




 \clearpage
 \section{Additional experimental data}\label{app:eval}

The data in this appendix was generated with the experimental setup described in~\secref{experiments}.
\begin{figure}[h]
  \begin{center}
    \includegraphics[width=\linewidth]{./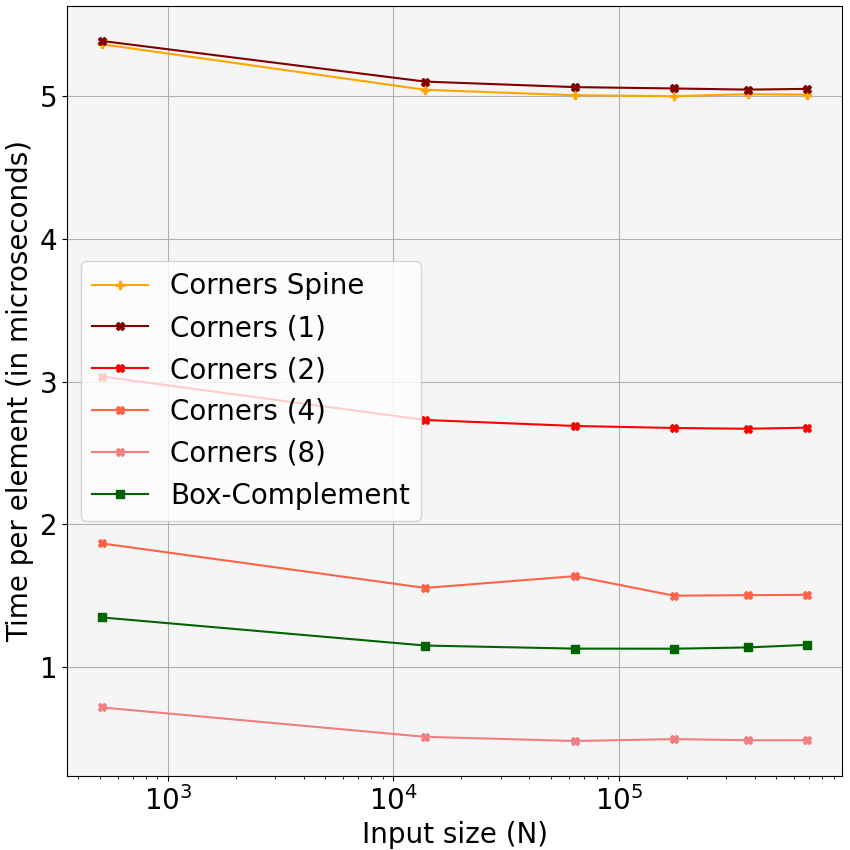}
    \end{center}
  \caption{Time per element of algorithms for strong excluded sums in 3D.}
  \label{fig:3d-perf}
\end{figure}

\begin{figure}[h]
  \begin{center}
    \includegraphics[width=\linewidth]{./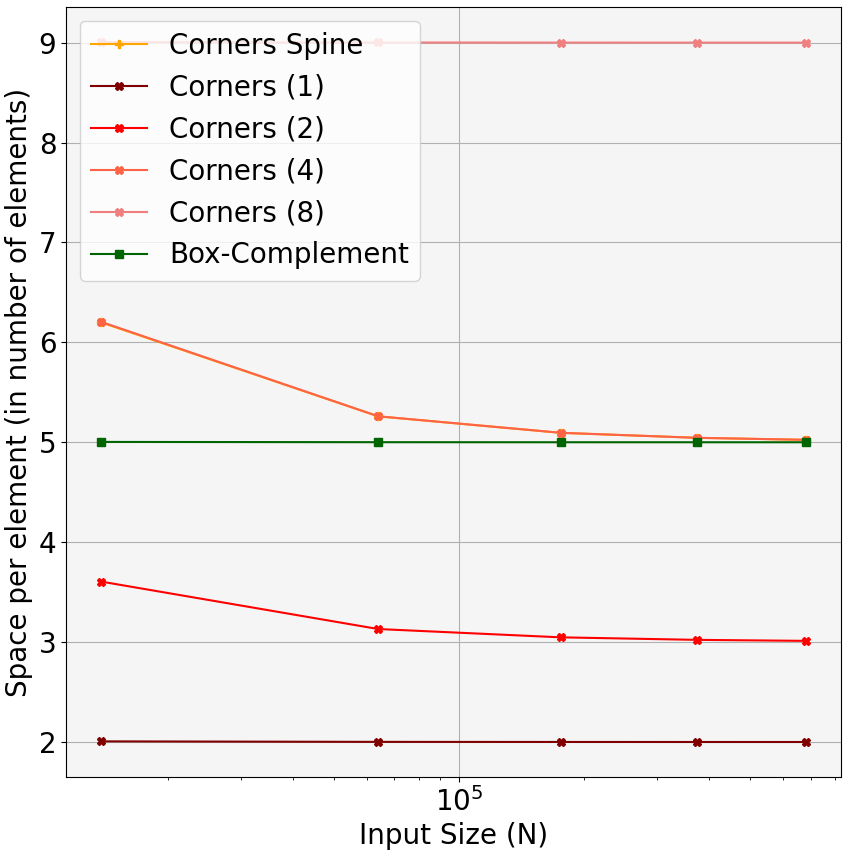}
    \end{center}
  \caption{Space per element of algorithms for strong excluded sums in 3D.}
  \label{fig:3d-space}
\end{figure}

\begin{figure}[t]
  \begin{center}
    \includegraphics[width=\linewidth]{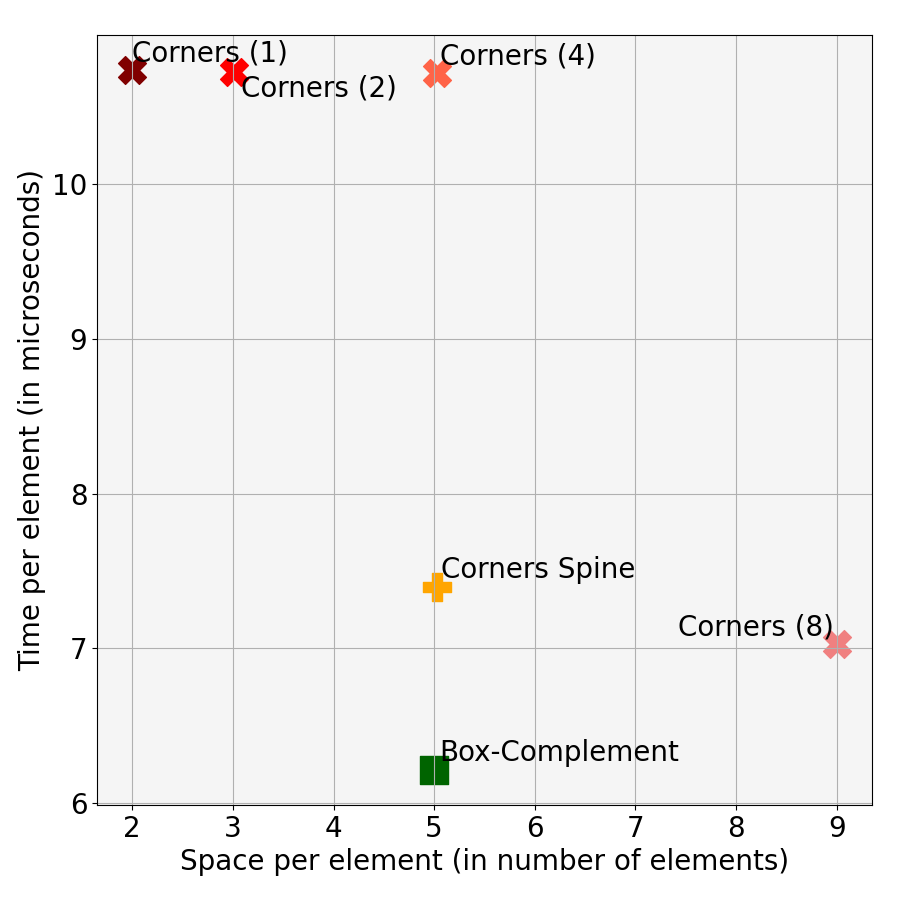}
    \end{center}
    \caption{Space and time per element of the corners and box-complement
      algorithms in 3 dimensions, with an artificial slowdown added to each
      numeric addition (or $\oplus$) that dominates the runtime.}
  \label{fig:scatter-limit}
\end{figure}
\end{document}